\newcommand{\braket}[2]{{\langle{#1}|{#2}\rangle}}
\newcommand{\ang}[2]{\angle^{#1}_{#2}}
\newcommand{\MCO}[3]{[#1]^{#2}_{#3}}
\newcommand{\sqO}[2]{[#1]_{#2}}
\newcommand{\abs}[1]{\lvert #1 \rvert}
\newcommand{\rpicl}{\Pi}
\newcommand{\rpi}[1]{\rpicl(#1)}
\newcommand{\rpisub}[1]{\rpicl_{#1}}
\newcommand{\rpisubu}[2]{\rpicl^{#2}_{#1}}
\newcommand{\rv}[2]{R_{#2}(#1)}
\newcommand{\rz}{R_{\hat{z}}}
\newcommand{\ry}{R_{\hat{y}}}
\newcommand{\rpigatesub}[1]{\gate{\mathrm{\rpisub{#1}}}}
\newcommand{\rpigatesubu}[2]{\gate{\mathrm{\rpisubu{#2}{#1}}}}
\newcommand{\rpigate}[1]{\gate{\rpi{#1}}}
\newcommand{\rvgate}[2]{\gate{\mathrm{\rv{#1}{#2}}}}
\newtheorem{lemma}{Lemma}
\newtheorem{theorem}{Theorem}
\renewcommand{\sec}[1]{\hyperref[sec:#1]{Section~\ref*{sec:#1}}}
\newcommand{\ssec}[1]{\hyperref[ssec:#1]{Section~\ref*{ssec:#1}}}
\newcommand{\fig}[1]{Figure~\hyperref[fig:#1]{\ref*{fig:#1}}}
\newcommand{\tab}[1]{Table~\hyperref[tab:#1]{\ref*{tab:#1}}}
\newcommand{\lem}[1]{Lemma~\hyperref[lem:#1]{\ref*{lem:#1}}}
\newcommand{\prop}[1]{\hyperref[prop:#1]{Proposition~\ref*{prop:#1}}}
\newcommand{\thm}[1]{Theorem~\hyperref[thm:#1]{\ref*{thm:#1}}}
\newcommand{\qc}[1]{Circuit~(\hyperref[qc:#1]{\ref*{qc:#1}})}
\newcommand{\apx}[1]{Appendix~\hyperref[apx:#1]{\ref*{apx:#1}}}
\newcounter{qcnum}
\newcommand{\qcref}[1]{\refstepcounter{qcnum}\tag{\theqcnum}\label{qc:#1}}
\begin{document}

\title{All You Need is pi: Quantum Computing with Hermitian Gates}

\author{Ben~Zindorf}
\email{ben.zindorf.19@ucl.ac.uk}
\affiliation{Department of Physics and Astronomy, University College London, Gower Street, WC1E 6BT London, United Kingdom}
\orcid{0000-0001-8630-3501}
\author{Sougato~Bose}
\affiliation{Department of Physics and Astronomy, University College London, Gower Street, WC1E 6BT London, United Kingdom}
\orcid{0000-0001-8726-0566}

\begin{abstract}
Universal gate sets for quantum computation, when single and two qubit operations are accessible, include both Hermitian and non-Hermitian gates. Here we {utilize the fact} that any single-qubit operator may be implemented as two Hermitian gates, and thus a purely Hermitian universal set is possible. This implementation can be used to prepare high fidelity single-qubit states in the presence of amplitude errors, and helps to achieve a high fidelity single-qubit gate decomposition using four Hermitian gates. An implementational convenience can be that non-identity single-qubit Hermitian gates are equivalent to $\pi$ rotations up to a global phase. We show that a gate set comprised of $\pi$ rotations about two fixed axes, along with the CNOT gate, is universal for quantum computation.
Moreover, we show that two $\pi$ rotations can transform the axis of any multi-controlled unitary, a special case being a single CNOT sufficing for any controlled $\pi$ rotation. These gates simplify the process of circuit compilation in view of their Hermitian nature. We exemplify by designing efficient circuits for a variety of controlled gates, and achieving a CNOT count reduction for the four-controlled Toffoli gate in LNN-restricted qubit connectivity.

\end{abstract}
\maketitle
\section{Introduction}
Pivotal requirements in the field of quantum computation are highly reliable gates, as well as {efficient} quantum circuits (i.e, {a low} number of fundamental gates {being used to implement} a given unitary operation). 
As it stands, {universal gate sets for quantum computation, consisting of fundamental one- and two-qubit gates, 
}
are composed of both Hermitian 
gates{, such as the CNOT, Hadamard, and Pauli gates}, as well as Non-Hermitian gates, such as the S and T gates 
\cite{nielsen_quantum_2002}. It is known that if you allow for three qubit gates, then Toffoli (which is Hermitian) allows for universal quantum computation along with the Hadamard. However, typically single and two qubit gates are used to implement the Toffoli. One can then ask a simple, but fundamental, question: can we do universal quantum computation purely by using {single and two qubit} Hermitian gates? As the CNOT is already Hermitian, the question effectively becomes as to whether all the required single-qubit unitaries can be Hermitian. As we will discuss in this paper, the answer is affirmative. 
The usage of Hermitian single-qubit gates then immediately has some positive implications for the implementations of quantum computation. 

 Firstly, {Hermitian single-qubit gates} can be implemented, up to a global phase, as $\pi$ rotations about any axis in the Bloch sphere, also called a $\pi$-pulse in experimental terminology \cite{bar-gill_solid-state_2013}. It effectively means that one switches on a Hamiltonian $g\hat{\sigma}\cdot\hat{v}$, where $g$ is a coupling strength, and $\hat{v}$ is the unit vector along any axis of the Bloch sphere, to act on the qubit for an amount of time $t=\pi/g$. For a reliable gate, the time $t$ has to be controlled very accurately. Typically, this time is a duration for which a voltage/microwave/laser pulse is applied to the qubit. Thus, one can set a control machine to generate one such exquisite pulse of duration giving a $\pi$ pulse to a very high precision, and then repeat the same pulse settings for {\em all} the gates, just by changing the axis $\hat{v}$ of the rotation in the Bloch sphere. It is plausible that in some technologies, the axes of rotation can be more precisely set than the time duration. If we have electromagnetic pulses of fixed duration, the axis of the Bloch sphere can be set to be anywhere in the $x-y$ plane by choosing their phase $\phi$ as
$\hat{v}=(\cos{\phi},-\sin{\phi})$ \cite{haroche_collge_2012}. Such rotations, along with Hadamard (which is also a $\pi$ rotation, albeit about a different axis), suffice. Moreover, as we will show, we require {\em only two {fixed}} axes for a universal Hermitian gate set, so that in any technology, one can focus on setting them precisely. In addition, composite pulses have been shown to significantly improve the fidelity of a $\pi$ pulse in the presence of pulse-strength errors (aka amplitude errors), while applying this to other angles is more challenging \cite{husain_further_2013,low_optimal_2014,tycko_fixed_1985,jones_nested_2013}. These robust $\pi$ pulses are a critical ingredient of dynamical decoupling
\cite{viola_dynamical_1999,uhrig_keeping_2007,souza_robust_2011,souza_robust_2012}, which is used to protect a qubit from the environment and reduce its decoherence rate. For the same reason, $\pi$ pulses have been extensively refined.
For the task of suppressing amplitude errors, we provide a sequence of two $\pi$ rotations which improves the fidelity of single-qubit state preparation. Similarly, we provide a high fidelity decomposition of arbitrary single-qubit gates using four $\pi$ rotations in the presence of amplitude errors.


{Secondly, a} positive implication stems from the inherent characteristic of Hermitian gates being their own inverse. This allows them to be used as a symbolic tool to simplify circuit optimization and compilation processes in many cases. As we will show, the Hermitian gates provide a useful framework, especially for the task of decomposing controlled gates. For example, it allows us to change the axis of rotation of any Multi-Controlled unitary to any other axis. This allows one to change the CNOT to any controlled $\pi$ rotation. We utilize this framework to provide a decomposition of any controlled $U(4)$ gate using a small number of CNOT gates. The Hermitian gates can then be replaced with gates from any choice of universal set in an efficient manner.
{In addition, we use the same framework to find efficient decompositions of various types of multi-controlled single-target gates with a small number of controls. While some of our circuits require a CNOT count which matches the state-of-the-art, others provide generalized versions of these gates, with no increase of CNOT count. In the case of a Toffoli gate with four controls under the restrictions of LNN/chain connectivity, our methods resulted in a reduction of CNOT gates compared to the state-of-the-art \cite{nemkov_efficient_2023}.}

{Here we summarize the results of the paper. 
We start by providing a geometrically motivated description of Hermitian $\pi$-rotations and using this to demonstrate the implications of known mathematical concepts, such as Householder reflections
\cite{householder_unitary_1958}, for single qubit rotations -- demonstrating the ability to use two Hermitian gates to decompose any single-qubit operator, or to transform the rotation axis of a given one. 
We then present a variety of universal gate sets comprised of {\em restricted} subsets of all Hermitian gates, smallest of which requires only the ability to apply the CNOT gate, and to rotate the Bloch sphere about {\em two fixed axes} by the {\em fixed angle} $\pi$ for universal quantum computation. 
We then present our $\pi$-rotation sequences which provide a quadratic suppression of amplitude errors, using four $\pi$-rotations for any single-qubit unitary, and only two  $\pi$-rotations for state preparation.
We then show that the single-qubit axis transformations can be applied to the target operator of controlled gates, which immediately provides improved decompositions of multi-controlled gates compared to state of the art. This also implies
that one CNOT gate suffices to implement any singly-controlled $\pi$-rotation, while any other choice of angle requires two CNOT gates, providing an alternative geometrical description of Lemma 5.5 in \cite{barenco_elementary_1995}.
We continue to show that Hermitian gates can be used as an intuitive symbolic tool for synthesizing multi-controlled gates, specifically resulting in a reduced CNOT count of the four-controlled Toffoli gate, compared to the state-of-the-art decompositions which relied on machine learning or exhaustive search. Finally we demonstrate the ability to add a control to a given unitary using the same formalism, exemplifying on $U(4)$ gates.}
\\

\section{Single qubit operators}\label{ssec:single_qubit_op}
We define a $\pi$-rotation as the Hermitian operator $\rpi{\hat{v}}:=i\rv{\pi}{\hat{v}}$, such that $\rv{\pi}{\hat{v}}\in SU(2)$ applies a rotation by $\pi$ about the axis $\hat{v}=(v_{x},v_{y},v_{z})$. The axis can be represented by the standard spherical coordinates 
$\hat{v}(\theta,\phi)=(\sin{\theta}\cos{\phi},\sin{\theta}\sin{\phi},\cos{\theta})$. 
A $\pi$-rotation about the axis $\hat{v}$ can thus be written in the two following ways:
\[
\rpi{\hat{v}}
=
\left(
\begin{matrix}
v_z & v_x-iv_y\\
v_x+iv_y & -v_z  
\end{matrix}
\right)
\]
\[
\rpi{\theta,\phi}:=\rpi{\hat{v}(\theta,\phi)}=
\left(
\begin{matrix}
\cos{\theta} & e^{-i\phi}\sin{\theta}\\
e^{i\phi}\sin{\theta} & -\cos{\theta} 
\end{matrix}
\right)
\]
 Applying two $\pi$-rotations about the {\em same axis} results in a $2\pi$-rotation, and thus $(\rpi{\hat{v}})^2=-\rv{2\pi}{\hat{v}}=I$.\\ 
 { Interestingly,} any $SU(2)$ operator can be implemented as two $\pi$-rotations{, as we state in \lem{2_rpi} and prove in \apx{pi_lemmas}}. {Since $SU(2)$ is a double cover of $SO(3)$, this Lemma can also be inferred from \cite{brezov_vector_2012, donchev_compositions_2015} in the mathematics literature, however, its implications for quantum computing seem to have been overlooked.} 
We mark $\ang{\hat{v}_1}{\hat{v}_2}$ as the angle between two unit vectors $\hat{v}_1,\hat{v}_2$, and $\hat{R}_{\hat{v}}(\theta)$ as the three dimensional rotation matrix by angle $\theta$ about the axis $\hat{v}$.
\begin{lemma}\label{lem:2_rpi}
Any $\rv{\lambda}{\hat{v}}\in SU(2)$ operator can be implemented as $\rpi{\hat{v}_2}\rpi{\hat{v}_1}$ 
with $\hat{v}_1$ as any unit vector perpendicular to $\hat{v}$, and $\hat{v}_2 = \hat{R}_{\hat{v}}(\frac{\lambda}{2})\hat{v}_1$ with $\frac{\lambda}{2}\in (-\pi,\pi]$.
\end{lemma}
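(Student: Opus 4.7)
The plan is to reduce the lemma to an identity among Pauli operators, since $\pi$-rotations have an especially clean form in that language. The key observation I would start from is that $\Pi(\hat{v}) = i R_{\hat{v}}(\pi) = \hat{v} \cdot \vec{\sigma}$, which follows immediately by plugging $\lambda = \pi$ into the formula $R_{\hat{v}}(\lambda)=\cos(\lambda/2) I - i \sin(\lambda/2) \hat{v}\cdot\vec{\sigma}$. This collapses every $\pi$-rotation into a single dotted Pauli vector and lets me turn the composition on the right-hand side into an algebraic product.

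Next I would compute
\[
\Pi(\hat{v}_2)\Pi(\hat{v}_1) = (\hat{v}_2 \cdot \vec{\sigma})(\hat{v}_1 \cdot \vec{\sigma}) = (\hat{v}_1 \cdot \hat{v}_2) I + i(\hat{v}_2 \times \hat{v}_1)\cdot\vec{\sigma},
\]
using the standard Pauli identity $(\vec{a}\cdot\vec{\sigma})(\vec{b}\cdot\vec{\sigma}) = (\vec{a}\cdot\vec{b}) I + i(\vec{a}\times\vec{b})\cdot\vec{\sigma}$. Comparing coefficient-by-coefficient with $R_{\hat{v}}(\lambda)=\cos(\lambda/2) I - i \sin(\lambda/2)\hat{v}\cdot\vec{\sigma}$, the claim reduces to the two scalar/vector conditions
\[
\cos(\lambda/2) = \hat{v}_1 \cdot \hat{v}_2, \qquad \sin(\lambda/2)\,\hat{v} = \hat{v}_1 \times \hat{v}_2.
\]

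The remaining step is a purely geometric check that the choice $\hat{v}_1 \perp \hat{v}$, $\hat{v}_2 = \hat{R}_{\hat{v}}(\lambda/2)\hat{v}_1$ realizes both conditions. Since $\hat{R}_{\hat{v}}(\lambda/2)$ is an $SO(3)$ rotation fixing $\hat{v}$, it maps the plane $\hat{v}^{\perp}$ to itself; hence $\hat{v}_2$ is a unit vector in $\hat{v}^{\perp}$ making angle $\lambda/2$ with $\hat{v}_1$ in the orientation given by $\hat{v}$. The dot-product identity is then immediate, and the cross-product identity follows because $\hat{v}_1,\hat{v}_2$ are both unit vectors in $\hat{v}^{\perp}$ with the right-hand-rule orientation, giving $\hat{v}_1 \times \hat{v}_2 = \sin(\lambda/2)\hat{v}$. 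Restricting $\lambda/2 \in (-\pi,\pi]$ ensures every $SU(2)$ element is reached exactly once.

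I do not expect serious obstacles: the only subtlety is sign bookkeeping, specifically the order $\hat{v}_2 \times \hat{v}_1$ coming out of the Pauli product versus the $\hat{v}_1 \times \hat{v}_2$ orientation induced by the active rotation $\hat{R}_{\hat{v}}(\lambda/2)$. Getting these two conventions to agree is the one place a careless minus sign could flip the axis $\hat{v}_2$, so I would verify it explicitly on a concrete example (e.g.\ $\hat{v}=\hat{z},\hat{v}_1=\hat{x}$) before asserting the general statement.
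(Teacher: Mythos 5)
Your proposal is correct and follows essentially the same route as the paper's proof: both reduce $\Pi(\hat v_2)\Pi(\hat v_1)$ to the dot and cross products of $\hat v_1,\hat v_2$ and then evaluate these from the geometry of the rotation in the plane $\hat v^{\perp}$ --- you merely package the entrywise matrix multiplication as the Pauli identity $(\vec a\cdot\vec\sigma)(\vec b\cdot\vec\sigma)=(\vec a\cdot\vec b)I+i(\vec a\times\vec b)\cdot\vec\sigma$, with the sign bookkeeping handled correctly. Your use of signed angles also lets you treat all of $\frac{\lambda}{2}\in(-\pi,\pi]$ uniformly, avoiding the paper's explicit case split over $(0,\pi)$, $(-\pi,0)$ and $\{0,\pi\}$, but the underlying argument is the same.
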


As $U(2)$ operators can be implemented as $SU(2)$ up to a global phase, \lem{2_rpi} shows that any single qubit operator can be implemented as two $\pi$-rotations (\fig{sphere}).
This implementation requires four parameters - two for each $\rpicl$ gate, while $SU(2)$ gates require only three. However, one of the vectors $\hat{v}_1,\hat{v}_2$ can always be chosen on the $xy$ plane, so that it is perpendicular to both the $\hat{z}$ axis and the vector $\hat{v}$. Therefore, one of the $\pi$-rotations only requires one parameter, since it can be described as $\rpi{\frac{\pi}{2},\phi}$. Moreover, since $\rpi{-\hat{v}} = -\rpi{\hat{v}}$ for any $\hat{v}$, the required $\pi$-rotations can always be chosen from one half of the Bloch sphere, e.g. the angles defining $\rpi{\theta,\phi}$ can be constrained to $\theta,\phi \in [0,\pi)$, {potentially introducing a $-1$ global phase to the $SU(2)$ gate.}\\
Arbitrary rotations about the $\hat{x},\hat{y}$, and $\hat{z}$ axes require a single parameter. In the $\pi$-rotation decomposition, this can be achieved by choosing one of the vectors $\hat{v_1},\hat{v_2}$ to be in a fixed position. For example, a rotation about $\hat{z}$ requires two vectors on the $xy$ plane, and thus a comfortable choice would be the $\hat{x}$ axis since $\rpi{\hat{x}}=X$, and $\rv{\lambda}{\hat{z}}$ can be implemented as $\rpi{\frac{\pi}{2},\frac{\lambda}{2}}X$, or as $X\rpi{\frac{\pi}{2},-\frac{\lambda}{2}}$. For this case, the only required $\pi$-rotations are $\rpi{\frac{\pi}{2},\phi}$, with $\phi\in [0,\frac{\pi}{2}]$.
\begin{figure}[h]
    \centering
    \begin{subfigure}{.49\linewidth}
    \centering
    \includegraphics[width=0.99\linewidth]{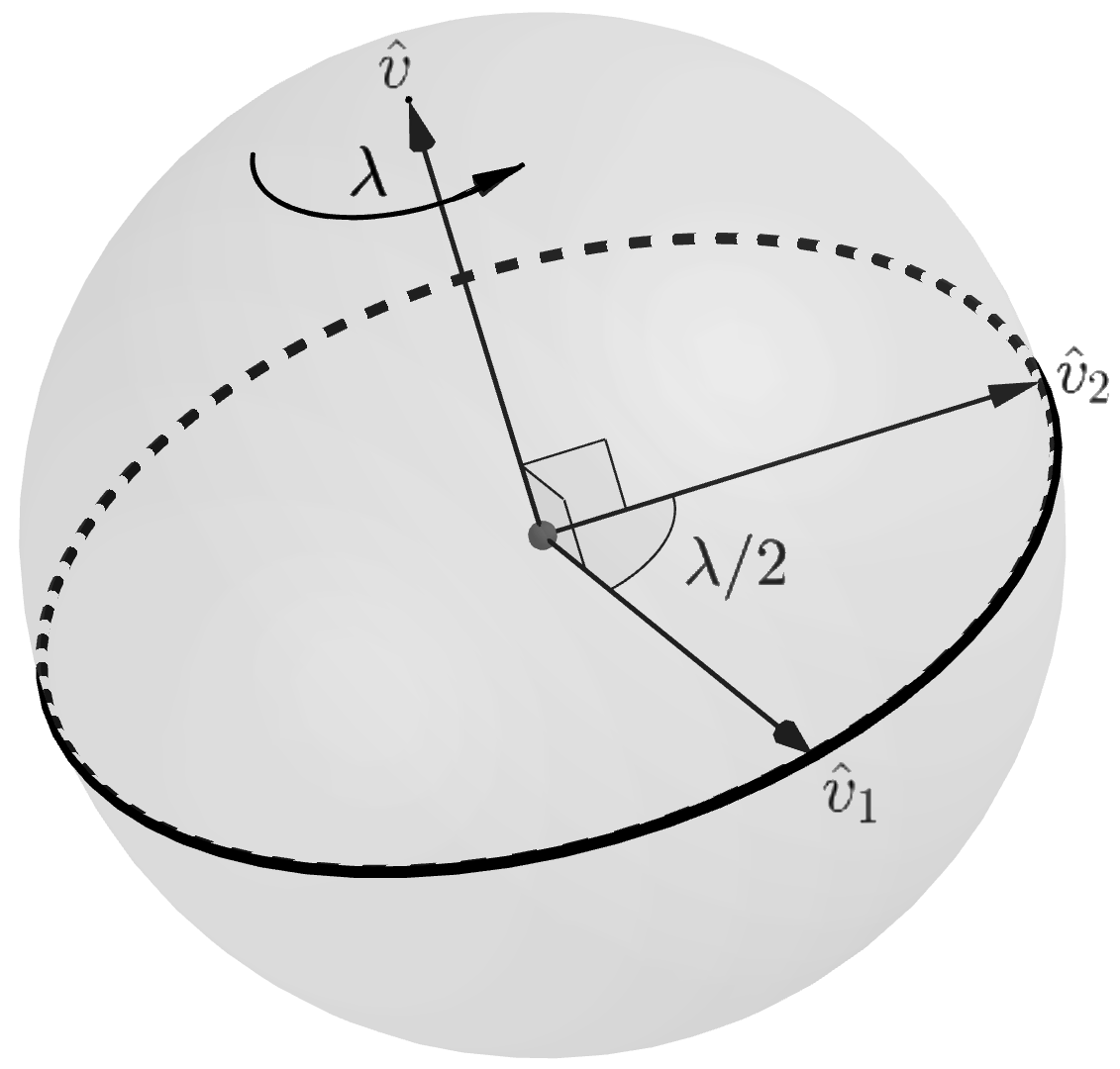}
    \caption{}
    \label{fig:sphere}
    \end{subfigure}
    \begin{subfigure}{.49\linewidth}
    \centering
    \includegraphics[width=0.99\linewidth]{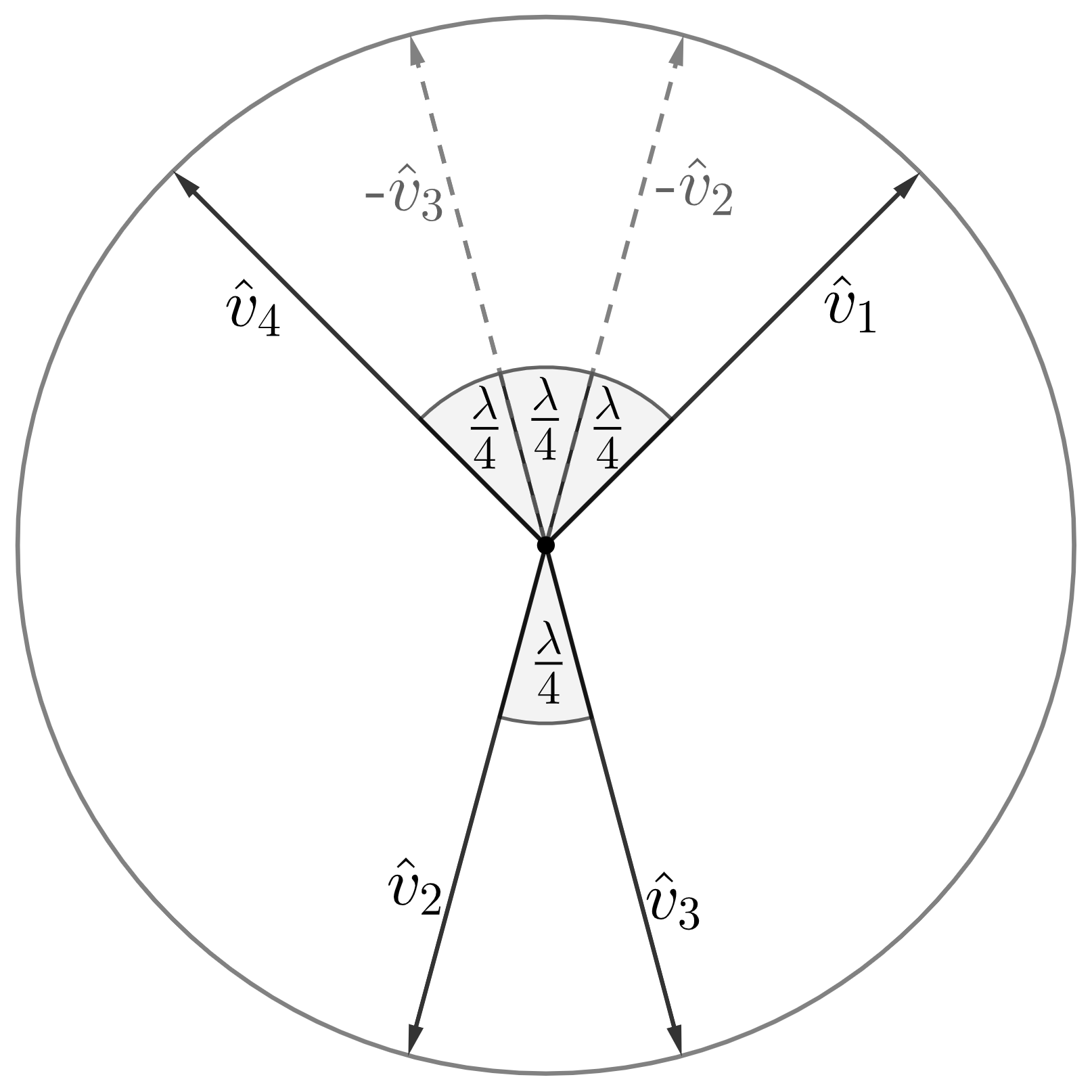}
    \caption{}
    \label{fig:vecs}    
    \end{subfigure}
    \caption{(a) A rotation by $\lambda$ about the vector $\hat{v}$ can be achieved by a $\pi$-rotation about $\hat{v}_1$ followed by a $\pi$-rotation about $\hat{v}_2$. {(b) Describing the chosen axes (on the plane perpendicular to $\hat{v}$) for our four $\Pi$ gates decomposition.}}
    \label{fig:sphere_vecs}    
\end{figure}

{

 {Single qubit $\pi$-rotations can be used to transform the rotation axis of any given single-qubit gate to any other axis, as we show next.
 We define the set of unit vectors $M(\hat{v}_1,\hat{v}_2)$ for any $\hat{v}_1,\hat{v}_2$, such that $\hat{v}_M \in M(\hat{v}_1,\hat{v}_2)$ iff $\hat{v}_2 = \hat{R}_{\hat{v}_M}(\pi)\hat{v}_1$. In case $\hat{v}_1=-\hat{v}_2$, the set $M(\hat{v}_1,\hat{v}_2)$ holds all of the vectors perpendicular to $\hat{v}_1$, and otherwise, the set holds the vector $\hat{v}_M=\frac{\hat{v}_1+\hat{v}_2}{\lvert \hat{v}_1+\hat{v}_2 \rvert}$ which is in the middle between $\hat{v}_1$ and $\hat{v}_2$. Therefore,  $M(\hat{v}_1,\hat{v}_2)\not=\emptyset$ for any choice of $\hat{v}_1,\hat{v}_2$. 
\begin{lemma}\label{lem:sq_trans} 
$
\rpi{\hat{v}}=
\rpi{\hat{v}_{M}}
\rpi{\hat{v}'}
\rpi{\hat{v}_{M}}
$
for any chosen unit vectors $\hat{v},\hat{v}'$, and $\hat{v}_M \in M(\hat{v},\hat{v}')$.
\end{lemma}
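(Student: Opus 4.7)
The plan is to reduce the identity to a statement about Bloch-sphere rotations via the well-known fact that conjugation by an $SU(2)$ element implements the corresponding $SO(3)$ rotation on the Pauli vector. First I would observe that, directly from the matrix form given in \ssec{single_qubit_op}, $\rpi{\hat{w}} = \hat{w}\cdot\vec{\sigma}$ for any unit vector $\hat{w}$, so each of the three factors on the right-hand side is literally a Pauli vector component. Also $\rpi{\hat{v}_M}^2 = I$, so $\rpi{\hat{v}_M}$ is its own inverse and the product $\rpi{\hat{v}_M}\rpi{\hat{v}'}\rpi{\hat{v}_M}$ is a true conjugation.

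Next I would invoke (or directly derive from $\sigma_a\sigma_b = \delta_{ab} I + i\epsilon_{abc}\sigma_c$) the standard identity
\[
(\hat{a}\cdot\vec{\sigma})(\hat{b}\cdot\vec{\sigma}) = (\hat{a}\cdot\hat{b})\,I + i(\hat{a}\times\hat{b})\cdot\vec{\sigma},
\]
and apply it twice to $(\hat{v}_M\cdot\vec{\sigma})(\hat{v}'\cdot\vec{\sigma})(\hat{v}_M\cdot\vec{\sigma})$. The scalar pieces all vanish because $(\hat{v}_M\times\hat{v}')\cdot\hat{v}_M = 0$, and the BAC--CAB expansion of $(\hat{v}_M\times\hat{v}')\times\hat{v}_M$ gives $\hat{v}' - (\hat{v}_M\cdot\hat{v}')\hat{v}_M$. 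Collecting terms yields
\[
\rpi{\hat{v}_M}\rpi{\hat{v}'}\rpi{\hat{v}_M} = \bigl[2(\hat{v}_M\cdot\hat{v}')\hat{v}_M - \hat{v}'\bigr]\cdot\vec{\sigma}.
\]

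The last step is purely geometric: the vector $2(\hat{v}_M\cdot\hat{v}')\hat{v}_M - \hat{v}'$ is exactly the image of $\hat{v}'$ under the $\pi$ rotation of $\mathbb{R}^3$ about the axis $\hat{v}_M$, i.e.\ $\hat{R}_{\hat{v}_M}(\pi)\hat{v}'$. By the defining property $\hat{v}_M \in M(\hat{v},\hat{v}')$, this image is $\hat{v}$, so the bracket equals $\hat{v}\cdot\vec{\sigma} = \rpi{\hat{v}}$, completing the identity.

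I do not expect a genuine obstacle, but the point that requires a little care is the degenerate case $\hat{v}' = -\hat{v}$, where $M(\hat{v},\hat{v}')$ is the whole great circle perpendicular to $\hat{v}'$; there $\hat{v}_M\cdot\hat{v}' = 0$, and the formula $2(\hat{v}_M\cdot\hat{v}')\hat{v}_M - \hat{v}' = -\hat{v}' = \hat{v}$ still holds, so the algebra naturally absorbs the non-uniqueness of $\hat{v}_M$. Alternatively, one can phrase the whole argument conceptually: $\rpi{\hat{v}_M} = i R_{\hat{v}_M}(\pi)$, the global phase cancels in the three-fold product, and the standard adjoint action $U(\hat{w}\cdot\vec{\sigma})U^\dagger = (\hat{R}_U\hat{w})\cdot\vec{\sigma}$ immediately gives the result; the direct Pauli calculation above is the most self-contained version.
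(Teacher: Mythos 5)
Your proof is correct, but it is a genuinely different route from the paper's. The paper never touches Pauli algebra: it observes that $\hat{v}$, $\hat{v}_M$, $\hat{v}'$ are coplanar with a common normal $\hat{u}$ and that the angle from $\hat{v}$ to $\hat{v}_M$ equals the angle from $\hat{v}_M$ to $\hat{v}'$, so by \lem{2_rpi} the products $\rpi{\hat{v}'}\rpi{\hat{v}_M}$ and $\rpi{\hat{v}_M}\rpi{\hat{v}}$ realise the same rotation about $\hat{u}$ and are therefore equal; left-multiplying by $\rpi{\hat{v}_M}$ and using $(\rpi{\hat{v}_M})^2=I$ then gives the claim. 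Your argument instead identifies $\rpi{\hat{w}}=\hat{w}\cdot\vec{\sigma}$ and evaluates the triple product with $(\hat{a}\cdot\vec{\sigma})(\hat{b}\cdot\vec{\sigma})=(\hat{a}\cdot\hat{b})I+i(\hat{a}\times\hat{b})\cdot\vec{\sigma}$, landing on $\bigl[2(\hat{v}_M\cdot\hat{v}')\hat{v}_M-\hat{v}'\bigr]\cdot\vec{\sigma}=\bigl(\hat{R}_{\hat{v}_M}(\pi)\hat{v}'\bigr)\cdot\vec{\sigma}$; I checked the algebra (the vanishing scalar terms, the sign from $i^2$, the BAC--CAB step) and it is right, and the final appeal to the definition of $M(\hat{v},\hat{v}')$ is legitimate once you note that $\hat{R}_{\hat{v}_M}(\pi)$ is an involution, so $\hat{v}'=\hat{R}_{\hat{v}_M}(\pi)\hat{v}$ is equivalent to $\hat{v}=\hat{R}_{\hat{v}_M}(\pi)\hat{v}'$. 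What your version buys is self-containedness --- no appeal to \lem{2_rpi} and hence none of its orientation bookkeeping --- plus a uniform, explicit treatment of the degenerate case $\hat{v}'=-\hat{v}$ where $M(\hat{v},\hat{v}')$ is a whole great circle. What the paper's version buys is economy: it reuses machinery already proved and used throughout, and keeps the development entirely inside the Bloch-sphere rotation picture rather than introducing Pauli-vector identities.
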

\begin{proof}
{Immediate from \lem{2_rpi}, noting that $\rpi{\hat{v}'}
\rpi{\hat{v}_{M}}$ and $\rpi{\hat{v}_{M}}
\rpi{\hat{v}}$ implement the same rotation.}
\end{proof}  
The following lemma{, which we prove in \apx{pi_lemmas},} allows to transform the axis of a given rotation to any other axis, using two $\pi$-rotations.
\begin{lemma}\label{lem:1q_transform} 
$
\rv{\lambda}{\hat{v}}
=
\rpi{\hat{v}_{M}}
\rv{\lambda}{\hat{v}'}
\rpi{\hat{v}_{M}}
$
for any angle $\lambda$, unit vectors $\hat{v},\hat{v}'$, and $\hat{v}_M \in M(\hat{v},\hat{v}')$.
\end{lemma}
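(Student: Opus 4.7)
The plan is to rewrite $\rv{\lambda}{\hat{v}'}$ so that a single $\pi$-rotation factor carries all of the axis information, and then let \lem{sq_trans} do the work of transporting that axis from $\hat{v}'$ to $\hat{v}$.

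First, I would observe that the Cartesian matrix displayed in \ssec{single_qubit_op} gives the identification $\rpi{\hat{u}} = u_x X + u_y Y + u_z Z = \hat{u}\cdot\Vec{\sigma}$ for every unit vector $\hat{u}$; equivalently, $\rpi{\hat{u}} = i\rv{\pi}{\hat{u}}$ expands via the Euler formula to precisely $\hat{u}\cdot\Vec{\sigma}$. Combined with $\rv{\lambda}{\hat{u}} = \cos\tfrac{\lambda}{2}\,I - i\sin\tfrac{\lambda}{2}\,\hat{u}\cdot\Vec{\sigma}$, this gives the useful rewriting
\[
\rv{\lambda}{\hat{v}'} \;=\; \cos\tfrac{\lambda}{2}\,I \;-\; i\sin\tfrac{\lambda}{2}\,\rpi{\hat{v}'},
\]
i.e.\ an affine combination of the identity and a single $\pi$-rotation.

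Next, I would conjugate both sides by $\rpi{\hat{v}_M}$. The identity term is preserved because $(\rpi{\hat{v}_M})^2 = I$, and by \lem{sq_trans} the factor $\rpi{\hat{v}_M}\rpi{\hat{v}'}\rpi{\hat{v}_M}$ collapses to $\rpi{\hat{v}}$. Re-grouping yields
\[
\rpi{\hat{v}_M}\,\rv{\lambda}{\hat{v}'}\,\rpi{\hat{v}_M} \;=\; \cos\tfrac{\lambda}{2}\,I - i\sin\tfrac{\lambda}{2}\,\rpi{\hat{v}} \;=\; \rv{\lambda}{\hat{v}},
\]
which is exactly the claim.

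I do not expect a real obstacle: the only step that requires any thought is the spectral rewriting in the first paragraph, after which \lem{sq_trans} closes the argument in a single line. If one wished to avoid reading $\rpi{\hat{u}} = \hat{u}\cdot\Vec{\sigma}$ off the matrix, an alternative would be to apply \lem{2_rpi} to write $\rv{\lambda}{\hat{v}'} = \rpi{\hat{w}_2}\rpi{\hat{w}_1}$, insert $(\rpi{\hat{v}_M})^2 = I$ between the two factors, apply \lem{sq_trans} twice to obtain $\rpi{\hat{w}_2''}\rpi{\hat{w}_1''}$ with $\hat{w}_i'' = \hat{R}_{\hat{v}_M}(\pi)\hat{w}_i$, and then reassemble via \lem{2_rpi}; the extra work in that route is purely checking that the perpendicularity $\hat{w}_1\perp\hat{v}'$, the angle $\ang{\hat{w}_1}{\hat{w}_2}=\lambda/2$, and the orientation of $(\hat{w}_1,\hat{w}_2,\hat{v}')$ all survive the proper rotation $\hat{R}_{\hat{v}_M}(\pi)\in SO(3)$, which they do because $SO(3)$ preserves inner products and orientation. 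The short route through the spectral form is cleaner and is the one I would present.
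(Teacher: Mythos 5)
Your proof is correct, and it takes a genuinely different route from the paper's. The paper stays entirely inside the product-of-$\pi$-rotations formalism: it invokes \lem{2_rpi} to write $\rv{\lambda}{\hat{v}} = \rpi{\hat{v}_2}\rpi{\hat{v}_1}$, applies \lem{sq_trans} to each factor separately, cancels the two inner $\rpi{\hat{v}_M}$ operators, and then argues that the image vectors $\hat{v}'_j = \hat{R}_{\hat{v}_M}(\pi)\hat{v}_j$ satisfy the hypotheses of \lem{2_rpi} relative to $\hat{v}'$ because rotations preserve angles, so that the inner product of $\pi$-rotations reassembles into $\rv{\lambda}{\hat{v}'}$ --- this is essentially the ``alternative'' route you sketch at the end. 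Your main route instead reads off $\rpi{\hat{u}} = u_xX+u_yY+u_zZ$ from the Cartesian matrix, hence $\rv{\lambda}{\hat{u}} = \cos\tfrac{\lambda}{2}\,I - i\sin\tfrac{\lambda}{2}\,\rpi{\hat{u}}$, and lets linearity of conjugation reduce the whole lemma to a single application of \lem{sq_trans}. This is shorter and sidesteps the geometric bookkeeping; in particular it avoids the point the paper leaves implicit, namely that $\hat{R}_{\hat{v}_M}(\pi)$ must be a \emph{proper} rotation for the signed relation $\hat{v}'_2 = \hat{R}_{\hat{v}'}(\tfrac{\lambda}{2})\hat{v}'_1$ (rather than its mirror image) to survive the transport --- a point you correctly flag as the only thing needing checking in the alternative route. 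What the paper's version buys is thematic consistency, since manipulating everything as products of $\pi$-rotations is the tool the rest of the paper leans on; as a proof of this particular lemma, your conjugation argument is cleaner.
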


}
}
{
\section{Hermitian Sets}
The Hermitian CNOT gate, along with arbitrary single-qubit operators are universal for quantum computation. Since any single qubit operator can be implemented using two $\pi$-rotations as per \lem{2_rpi}, the Hermitian set $\{CNOT,\rpi{\theta,\phi}\}$ with $\theta,\phi \in [0,\pi)$ is universal as well. Similarly, the universality of $\{CNOT,H,\rz\}$ provides the Hermitian universal set $\{CNOT,H,\rpi{\frac{\pi}{2},\phi}\}$ with $\phi\in [0,\frac{\pi}{2}]$. 
The latter Hermitian set requires significantly less available axes for $\pi$-rotations than the former.
Since $S:=P(\frac{\pi}{2})=\rpisub{S}X$, and $T:=P(\frac{\pi}{4})=\rpisub{T}X$ up to a global phase, with 
$\rpisub{S}:=\rpi{\frac{\pi}{2},\frac{\pi}{4}}$, $\rpisub{T}:=\rpi{\frac{\pi}{2},\frac{\pi}{8}}$ {and $P(\psi):= 
e^{i\frac{\psi}{2}}\rv{\psi}{\hat{z}}$ as the phase gate,} the Hermitian sets $\{CNOT,H,X,\rpisub{S}\}$ and $\{CNOT,H,X,\rpisub{T}\}$ span the Clifford and and the universal Clifford+T \cite{bravyi_universal_2005} groups.
The universality of the set $\{CNOT,H,X,\rpisub{T}\}$ is particularly interesting since it shows that any quantum operator can be realised using CNOT gates along with $\pi$-rotations about the three fixed axes $\hat{x}=\hat{v}(\frac{\pi}{2},0)$, $\hat{h}=\hat{v}(\frac{\pi}{4},0)$ and $\hat{v}_T=\hat{v}(\frac{\pi}{2},\frac{\pi}{8})$ {(for a visual representation see \cite{zindorf_multi-controlled_2025})}.
We now show that the CNOT gate along with $\pi$-rotations around the {\em two fixed axes} $\hat{h}$ and $\hat{v}_T$ is universal for quantum computation as well. 
We simply remove the $X$ gate from the Hermitian version of Clifford+T {(can be applied for the Hermitian version of Clifford as well)}, since it can be implemented using other gates from this set as follows.
\[
\resizebox{0.9\linewidth}{!}{
 \Qcircuit @C=0.5em @R=0.em @!R { 
	 	\nghost{{q} :  } & \lstick{{q} :  } & \targ & \qw & \targ & \qw & \qw \\
	 	\nghost{{a} :  } & \lstick{{a} :  } & \ctrl{-1} & \rpigatesub{T} & \ctrl{-1} & \rpigatesub{T} & \qw \\
 }
\hspace{5mm}\raisebox{-3mm}{=}\hspace{0mm}
\Qcircuit @C=0.5em @R=0.2em @!R { 
	 	\nghost{{q} :  } & \lstick{{q} :  } & \targ & \gate{\mathrm{H}} & \ctrl{1} & \gate{\mathrm{H}} & \targ & \gate{\mathrm{H}} & \ctrl{1} & \gate{\mathrm{H}} & \qw \\
	 	\nghost{{a} :  } & \lstick{{a} :  } & \ctrl{-1} & \qw & \targ & \qw & \ctrl{-1} & \qw & \targ & \qw & \qw \\
 }
 }
\]

With qubit $a$ used as an ancilla in an arbitrary state, to implement {the $X$ gate on $q$}. If the ancilla is in the state $\ket{0}$, the first CNOT gate can be removed from each these circuits, and if the ancilla is in state $\ket{1}$, the $X$ gate can be implemented using one CNOT {controlled by $a$}. {In \apx{toffolispi}, we provide implementations of the Toffoli gate using either the $\{CNOT,H,X,\rpisub{T}\}$ set or the $\{CNOT,H,\rpisub{T}\}$ set.}
We have presented various universal sets of gates since existing circuits may be set up to use any local unitary or specific sets of unitaries. These could then be easily transformed to be implemented in terms of their Hermitian counterparts as given in \tab{herm_sets}. Note that more {\em restrictive} a set of gates is, one may require a larger number of gates for the given operator as should be expected.
\begin{table}[H]
    \centering
    \resizebox{0.99\linewidth}{!}{
    \begin{tabular}{ 
    |c||c||c|c|}
\hline
 Group&Standard Set&Hermitian Set&Range\\
 \hline
 Universal    & $\{CNOT,U\}$ & $\{CNOT,\rpi{\theta,\phi}\}$   & $\theta,\phi \in [0,\pi)$ \\
& $\{CNOT,H,\rz\}$     & $\{CNOT,H,\rpi{\frac{\pi}{2},\phi}\}$   & $\phi\in [0,\frac{\pi}{2}]$ \\
 & $\{CNOT,H,T\}$   & $\{CNOT,H,\rpisub{T},X\}$  &  \\
  & $\{CNOT,H,T\}$   & $\{CNOT,H,\rpisub{T}\}$  &  \\
 \hline
 Clifford     & $\{CNOT,H,S\}$   & $\{CNOT,H,\rpisub{S},X\}$   &  \\
      & $\{CNOT,H,S\}$   & $\{CNOT,H,\rpisub{S}\}$   &  \\
 \hline
    \end{tabular}
    }
    \caption{Commonly used gate sets and their Hermitian counterparts.}
    \label{tab:herm_sets}
\end{table}
    }

{
\section{Amplitude error suppression}\label{ssec:single_qubit_es}
We now consider the benefits of using $\pi$ rotations in the presence of systematic amplitude errors. Similarly to \cite{husain_further_2013,low_optimal_2014,tycko_fixed_1985,jones_nested_2013}, we assume that while intending to apply an ideal rotation $\rv{\lambda}{\hat{v}}$, these errors change the rotation to $\rv{(1+\epsilon)\lambda}{\hat{v}} = \rv{\lambda}{\hat{v}}\rv{\epsilon\lambda}{\hat{v}}$ with a small unknown $\epsilon$. We wish to minimize the infidelity which is defined as $\mathcal{I} = 1-\abs{tr(VU^\dagger)}/2$, where $U$ and $V$ are the ideal and non-ideal rotations, respectively.
{For this task, we use the freedom given by the decomposition in \lem{2_rpi} to arbitrarily choose one of the two required vectors from the plain perpendicular to $\hat{v}$ to implement two instances of $\rv{\tfrac{\lambda}{2}}{\hat{v}}$ in different ways. An additional benefit unique to Hermitian $\pi$-rotation is the freedom to replace a $\Pi(\hat{v}_j)$ gate with $\Pi(-\hat{v}_j)$ while only introducing a global phase of $-1$. Utilizing these freedoms for the task of minimizing the infidelity has resulted in the following.}

We use \lem{2_rpi} to decompose $\rv{\lambda}{\hat{v}}$ using four $\pi$-rotations as $\rpisub{4}\rpisub{3}\rpisub{2}\rpisub{1}$ such that $\rpisub{j}=\rpi{\hat{v}_j}$ and $\rpisub{4}\rpisub{3}=\rpisub{2}\rpisub{1}=-\rv{\tfrac{\lambda}{2}}{\hat{v}}$. The axis $\hat{v}_1$ is perpendicular to $\hat{v}$, and the rest are defined as $\hat{v}_{j+1} = \hat{R}_{\hat{v}}(\tfrac{\lambda}{4}+j\pi)\hat{v}_{j}$ (\fig{vecs}).
In the presence of amplitude errors, each $\rpisub{j}$ gate is replaced by $\rpisub{j}\rv{\epsilon\pi}{\hat{v}_j}$. While we had the freedom to choose $\hat{v}_3$ as any vector perpendicular to $\hat{v}$, our specific choice is beneficial, as it provides the following commutation rules which can be obtained using \lem{1q_transform}: $\rv{\epsilon\pi}{\hat{v}_4}
\rpisub{3}= \rpisub{3}\rv{\epsilon\pi}{-\hat{v}_2}$ and $\rpisub{2}
\rv{\epsilon\pi}{\hat{v}_1} = 
\rv{\epsilon\pi}{-\hat{v}_3}\rpisub{2}$. The four imperfect $\Pi$ gates  decomposition can therefore be written as $\rpisub{4}
\rpisub{3}
E
\rpisub{2}
\rpisub{1}=\rv{\tfrac{\lambda}{2}}{\hat{v}}E\rv{\tfrac{\lambda}{2}}{\hat{v}}$ with
\[
E := \rv{\epsilon\pi}{-\hat{v}_2}
\rv{\epsilon\pi}{\hat{v}_3}
\rv{\epsilon\pi}{\hat{v}_2}
\rv{\epsilon\pi}{-\hat{v}_3},
\]
{
and the infidelity is given by 
$\mathcal{I} = 1-\abs{tr(E)}/2$.
From the definition of $\hat{v}_2$ and $\hat{v}_3$, and since the trace is invariant under a change of basis, we can solve for $\hat{v}=\hat{z}$, $\hat{v}_2=\hat{y}=v(\tfrac{\pi}{2},\tfrac{\pi}{2})$ and $\hat{v}_3=v(\tfrac{\pi}{2},\tfrac{\pi}{2}+\tfrac{\lambda}{4})$.
Using \lem{2_rpi} we can write for this choice
$\rv{\epsilon\pi}{\hat{v}_2} = 
\Pi(\frac{\pi\epsilon}{2},0)Z$
and
$\rv{\epsilon\pi}{\hat{v}_3} = 
\Pi(\frac{\pi\epsilon}{2},\frac{\lambda}{4})Z$.
Setting $a=\cos^2(\tfrac{\pi\epsilon}{2})+e^{i\frac{\lambda}{4}}\sin^2(\tfrac{\pi\epsilon}{2})$ and $b=(e^{-i\frac{\lambda}{4}}-1)\cos(\tfrac{\pi\epsilon}{2})\sin(\tfrac{\pi\epsilon}{2})$ provides the following from the definition of $\Pi(\theta,\phi)$
{
\begin{align*}
&tr(E) = 
tr((Z\Pi(\tfrac{\pi\epsilon}{2},0)
\Pi(\tfrac{\pi\epsilon}{2},\tfrac{\lambda}{4}))^2)
=\\
&tr\left(
\left(
\begin{matrix}
a &
b\\
b^* &
-a^* 
\end{matrix}
\right)^2
\right)=
a^2+{a^*}^2+2b^*b
=\\
&2\cos^2(\tfrac{\pi\epsilon}{2})\left(\cos^2(\tfrac{\pi\epsilon}{2})+2\sin^2(\tfrac{\pi\epsilon}{2})\right)
+2\cos(2\tfrac{\lambda}{4})\sin^4(\tfrac{\pi\epsilon}{2})
.
\end{align*}
}
{
Substituting $\cos^2(\tfrac{\pi\epsilon}{2}) = 1-\sin^2(\tfrac{\pi\epsilon}{2})$
and
$\cos(2\tfrac{\lambda}{4}) = 1-2\sin^2(\tfrac{\lambda}{4})$
provides
$tr(E)=
2-4\sin^4(\tfrac{\pi\epsilon}{2})\sin^2(\tfrac{\lambda}{4})$. We get $\abs{tr(E)}=tr(E)$ when the amplitude error is sufficiently small ($\abs{\epsilon}<0.6$), 
and therefore, by definition, $\mathcal{I} = 2\sin^4(\tfrac{\epsilon\pi}{2})\sin^2(\tfrac{\lambda}{4})\approx \epsilon^4\tfrac{\pi^4}{8}\sin^2(\tfrac{\lambda}{4})$.}}
This provides a quadratic improvement compared to the infidelity of the direct implementation $R_{\hat{v}}((1+\epsilon)\lambda)$ given by $1-\cos(\tfrac{\epsilon\lambda}{2})\approx \epsilon^2\lambda^2/8$. We provide a numerical analysis in \fig{gate_fid_graph}, comparing these implementations with the Euler ZYZ, and the two $\Pi$ gate (from \lem{2_rpi} with $\hat{v}_2\Rightarrow -\hat{v}_2$) decompositions. As can be seen, the latter two decompositions result in a similar infidelity for the chosen rotations.

\begin{figure}[h]
    \centering
    \includegraphics[width=1\linewidth]{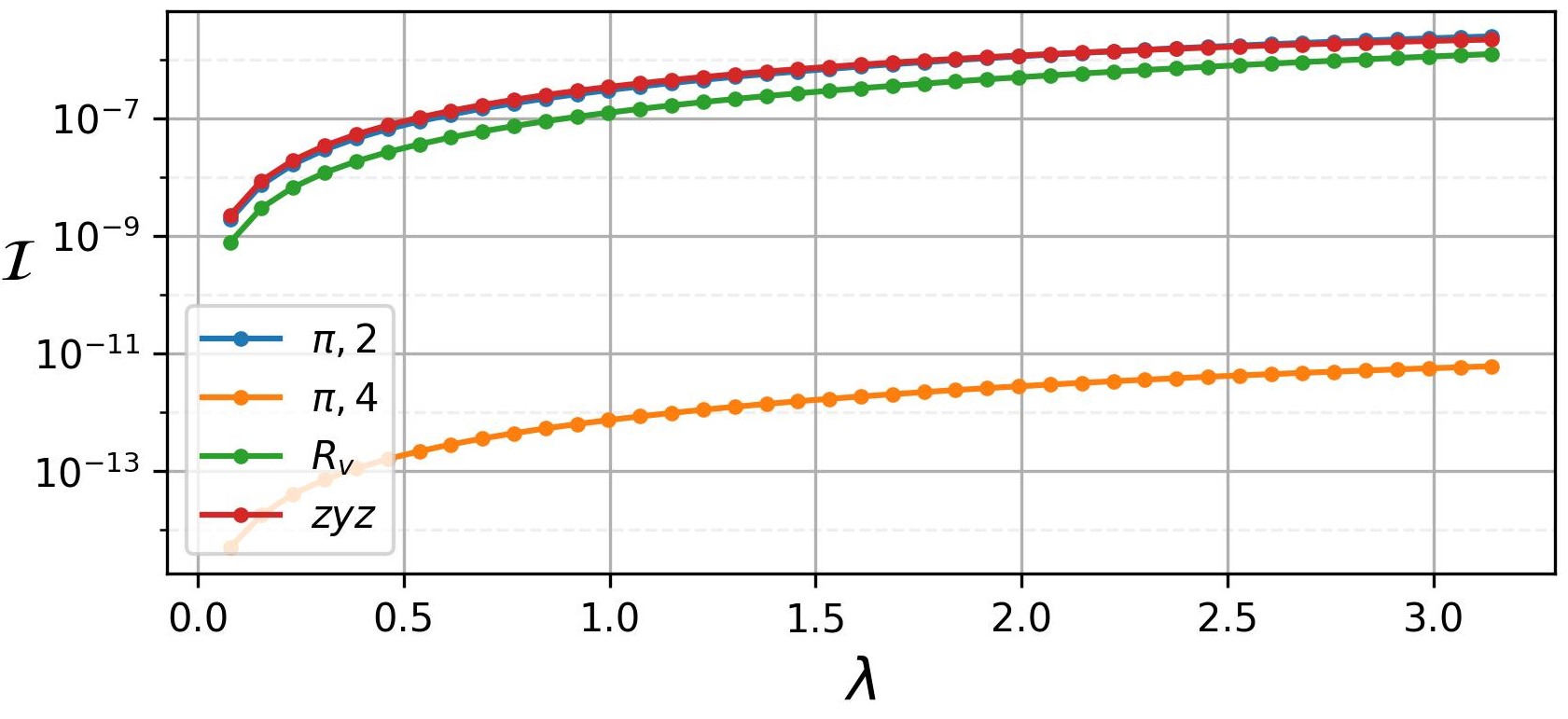}
    \caption{{Numerical gate infidelity (log scale) as a function of the rotation angle $\lambda$ with a fixed axis $\hat{v}(\tfrac{\pi}{2},\tfrac{\pi}{16})$ and an amplitude error $\epsilon=10^{-3}$. Comparing the two $\Pi$, four $\Pi$, Euler ZYZ and the direct $R_v$ implementations.}}
    \label{fig:gate_fid_graph}
\end{figure}
Although other four-step decompositions have been proposed \cite{wimperis_iterative_1991,wimperis_broadband_1994,gevorgyan_ultrahigh-fidelity_2021} in the context of composite pulses, achieving similar benefits, we are not aware of such a decomposition which achieves {\em arbitrary} rotations using four $\pi$ rotations/pulses.

By relaxing the requirement to implement a specific rotation, we now focus on transforming an arbitrary initial single-qubit state $\ket{\Psi_A}$ to an arbitrary final state $\ket{\Psi_B}$ with the corresponding state vectors $\hat{v}_A,\hat{v}_B$ respectively. It has previously been shown that a single $\pi$ rotation about the axis $\tfrac{\hat{v}_A+\hat{v}_B}{\abs{\hat{v}_A+\hat{v}_B}}$ suffices for this task \cite{shim_single-qubit_2013}. Moreover, since $\pi$ is the only angle which allows to transform $\ket{0}$ to $\ket{1}$ in one step, it immediately follows that $\pi$ is the {\em only} rotation angle which allows to transform {\em any} initial state to {\em any} final state. 
Generally, many trajectories can be chosen between $\hat{v}_A$ and $\hat{v}_B$, for example, a geodesic trajectory can be achieved by a $R_{\hat{v}_\perp}(\lambda)$ rotation with $\hat{v}_\perp$ perpendicular to both $\hat{v}_A,\hat{v}_B$ and $\lambda=\ang{\hat{v}_A}{\hat{v}_B}$.
In the presence of amplitude errors, we find that using the two $\Pi$ gate decomposition from \lem{2_rpi} can provide a quadratic improvement to the fidelity of the final state, compared to the geodesic approach.
The state infidelity can be written as $\mathcal{I}_{\ket{}}=1-\abs{\braket{\Psi_B}{\Psi'_B}}^2$, with $\ket{\Psi'_B}$ as the final state achieved by imperfect rotations.
We decompose $R_{\hat{v}_\perp}(\lambda)$ as $\Pi_2\Pi_1$ using \lem{2_rpi} with $\hat{v}_2\Rightarrow -\hat{v}_2$ as before, while taking advantage of the fact that $\hat{v}_1$ can be chosen as any vector perpendicular to $\hat{v}_\perp$. We find that the ideal choice in this case is $\hat{v}_1=\hat{R}_{\hat{v}_\perp}(\tfrac{\pi}{2}+\tfrac{\lambda}{4})\hat{v}_A$. As can be seen in \fig{state_fid}, this provides a quadratic improvement compared to the geodesic trajectory, the Euler ZYZ, and the single $\Pi$ decompositions. 
}
\begin{figure}[h]
    \centering
    \includegraphics[width=1\linewidth]{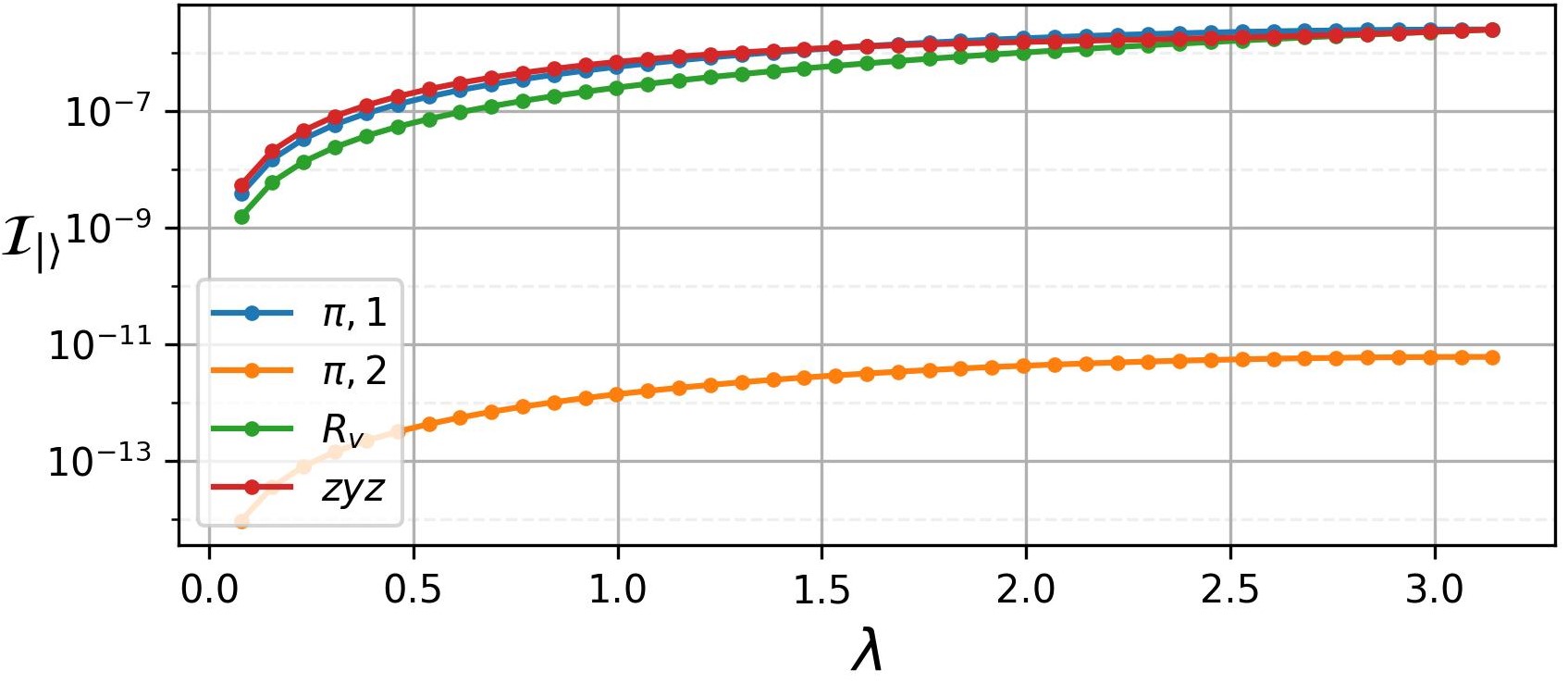}
    \caption{{Numerical state infidelity (log scale) as a function of $\lambda$, transforming the $\ket{0}$ state to a final state defined by the state vector $\hat{v}_B=\hat{v}(\lambda,\tfrac{\pi}{2})$. Comparing the single $\Pi$, two $\Pi$, Euler ZYZ and the geodesic ($R_{v_\perp}$) implementations with  amplitude error $\epsilon=10^{-3}$.}}
    \label{fig:state_fid}
\end{figure}
 \section{Controlled $\rpicl$ Gates}
In this section we show that a single-controlled $U\in U(2)$ gate can be implemented using one $CNOT$ gate \textit{iff}  $U=e^{i\psi}\rpi{\hat{v}}$ for any unit vector $\hat{v}$ and any angle $\psi$. Furthermore, we
provide a method to transform the rotation axis of any multi-controlled $U(2)$ gate to any other axis using two single qubit $\pi$-rotations.

We use $\MCO{\cdot}{C}{t}$ to mark a multi-controlled gate with a control set $C$ {holding $n$ qubits}, and a target {qubit} $t$. Similarly, $\sqO{\cdot}{t}$ marks a single qubit gate applied on $t$.
\begin{lemma}\label{lem:MC_transform} 
$\MCO{\omega\rv{\lambda}{\hat{v}}}{C}{t}=\sqO{\rpi{\hat{v}_{M}}}{t}\MCO{\omega\rv{\lambda}{\hat{v}'}}{C}{t}\sqO{\rpi{\hat{v}_{M}}}{t}$ with $\omega := e^{i\psi}$
for any angles $\lambda,\psi$, unit vectors $\hat{v},\hat{v}'$, and $\hat{v}_M \in M(\hat{v},\hat{v}')$.
\end{lemma}
\begin{proof}
{
The gate $\MCO{\omega\rv{\lambda}{\hat{v}'}}{C}{t}$ applies $\omega\rv{\lambda}{\hat{v}'}$ on qubit $t$ if set $C$ is in state $\ket{11..1}$ ("on"), and $I$ otherwise.
{For $\sqO{\rpi{\hat{v}_{M}}}{t}
\MCO{\omega\rv{\lambda}{\hat{v}'}}{C}{t}
\sqO{\rpi{\hat{v}_{M}}}{t}$, }
if set $C$ is "on", we get $\omega\rpi{\hat{v}_{M}}\rv{\lambda}{\hat{v}'}\rpi{\hat{v}_{M}}=\omega\rv{\lambda}{\hat{v}}$ from \lem{1q_transform}, and $\rpi{\hat{v}_{M}}\rpi{\hat{v}_{M}}=I$ otherwise. 
}
\end{proof}
{Existing work which addresses low CNOT count implementations of multi-controlled gates largely distinguish between the most general form of these gates, allowing for any rotation angle $\lambda$ about any axis $\hat{v}$, and the less general form, allowing any angle $\lambda$ about a fixed axis $\hat{v}'$. 
Focusing on the latter allows to simplify the problem, which may result in a lower gate count compared to the former. However, \lem{MC_transform} suggests that this lower gate count is applicable to both, at the cost of two additional single-qubit $\Pi$ gates. 
These $\Pi$ gates are applied on both sides of a given implementation of the multi-controlled gate, without requiring any further adjustments.
For example, the authors of \cite{vale_circuit_2024} show that any multi-controlled $SU(2)$ gate can be implemented using $20n+O(1)$ CNOT gates, while only $16n+O(1)$ CNOT gates are required if the rotation is about the $\hat{v}'=z$ axis. 
\lem{MC_transform} immediately implies that any multi-controlled $SU(2)$ gate can be implemented using $16n+O(1)$ CNOT gates -- a $20\%$ improvement over the state-of-the-art.
We note that the axis transformation from \lem{MC_transform} is not restricted to multi-controlled $SU(2)$ gates, but depending on the choice of $\psi$, it can be applied to multi-controlled $U(2)$ as well. This also allows to use well studied implementations of the multi-controlled Pauli-X gate to implement a multi-controlled Hadamard with the same CNOT count, by simply noting that both X and H are Hermitian $\pi$-rotations, albeit about a different axis. {For example, this would allow the results of \cite{rosa_optimizing_2025} to be modified, using our \lem{MC_transform}, resulting in a multi-controlled Hadamard cost of $12n$ instead of $32n$ -- an immediate $62.5\%$ reduction (which was already known from \cite{zindorf_efficient_2025} with a more relaxed ancilla requirement).
}

For the case of a single control, }  
by setting $\lambda=\pi, \hat{v}'=\hat{x}, C=\{c\}$, and replacing  $\psi$ with $\psi+\frac{\pi}{2}$, \lem{MC_transform} shows that {\em any} $\MCO{e^{i\psi}\rpi{\hat{v}}}{c}{t}$ gate {(e.g. controlled Hadamard)} can be implemented using a single CNOT gate simply by choosing $\hat{v}_M \in M(\hat{v},\hat{x})$ as:
\[
 \resizebox{0.9\linewidth}{!}{
\Qcircuit @C=1.0em @R=0.2em @!R { 
	 	\nghost{{c} :  } & \lstick{{c} :  }   & \ctrl{1} & \qw \\
	 	\nghost{{t} :  } & \lstick{{t} :  }  & \gate{e^{i\psi}\rpi{\hat{v}}} & \qw \\
 }
  \hspace{5mm}\raisebox{-4.5mm}{=}\hspace{0mm}
 \Qcircuit @C=1.0em @R=0.3em @!R { 
	 	\nghost{{c} :  } & \lstick{{c} :  }   & \gate{P(\psi)} & \ctrl{1} & \qw & \qw \\
	 	\nghost{{t} :  } & \lstick{{t} :  }  & \rpigate{\hat{v}_M} & \targ & \rpigate{\hat{v}_M} & \qw \\
 }
}
 \qcref{crpi}
 \]

{In \lem{rpi_if_1cx} (\apx{pi_lemmas}), we} show that these controlled $\pi$-rotations are the {\em only} controlled gates which can be implemented using a single CNOT gate. 
Combining 
{this}, and \lem{MC_transform} for a single control, proves the following theorem. {This theorem can also be realized from Lemma 5.5 in \cite{barenco_elementary_1995}.}
\begin{theorem} \label{thm:1CNOT_iff}
A single-controlled unitary gate $\MCO{U}{c}{t}, U\in U(2)$ can be implemented using one $CNOT$ gate iff $U=e^{i\psi}\rpi{\hat{v}}$ for any unit vector $\hat{v}$ and any angle $\psi$.
\end{theorem}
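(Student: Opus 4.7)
The plan is to observe that this theorem is essentially a direct synthesis of two results already established: \lem{MC_transform} handles the ``if'' direction (constructive existence of a one-CNOT implementation) and \lem{rpi_if_1cx} handles the ``only if'' direction (necessity of the form $e^{i\psi}\rpi{\hat{v}}$). So the proof is almost bookkeeping, and the main task is to verify that the specialization of \lem{MC_transform} to one control gives what we need.

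For the ``if'' direction, I would take $U = e^{i\psi}\rpi{\hat{v}}$ for some axis $\hat{v}$ and phase $\psi$, and apply \lem{MC_transform} with the choices $C = \{c\}$, $\hat{v}' = \hat{x}$, $\lambda = \pi$, and phase shift $\psi \mapsto \psi + \pi/2$. Using the identity $i\rv{\pi}{\hat{x}} = \rpi{\hat{x}} = X$, the inner multi-controlled block reduces to $\MCO{e^{i\psi}X}{c}{t}$, which factors as $\sqO{P(\psi)}{c}\MCO{X}{c}{t}$ (i.e.\ one CNOT conjugated by the appropriate single-qubit $\pi$-rotations $\rpi{\hat{v}_M}$ on $t$, with $\hat{v}_M \in M(\hat{v},\hat{x})$). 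This is precisely \qc{crpi} displayed just above the theorem, and it uses exactly one CNOT.

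For the ``only if'' direction I would simply invoke \lem{rpi_if_1cx}, which already asserts that any $\MCO{U}{c}{t}$ implementable with a single CNOT must satisfy $U = e^{i\psi}\rpi{\hat{v}}$ for some $\psi$ and $\hat{v}$. Combining the two directions yields the iff statement.

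Since all the substantive work — the structural analysis of a generic one-CNOT two-qubit circuit in \lem{1cx_gen_struct}, the reduction in \lem{rpi_if_1cx}, and the constructive axis transformation in \lem{MC_transform} — has been done already, there is no genuine obstacle here; the only thing to be careful about is matching the phase conventions so that the $e^{i\pi/2}$ factor inherent in $\rpi{\hat{v}} = i\rv{\pi}{\hat{v}}$ is absorbed correctly when invoking \lem{MC_transform}, which is why the phase in that lemma is shifted by $\pi/2$ before specialization.
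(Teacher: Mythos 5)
Your proposal matches the paper's own argument exactly: the paper proves \thm{1CNOT_iff} by combining \lem{rpi_if_1cx} for the ``only if'' direction with \lem{MC_transform} specialized to $C=\{c\}$, $\hat{v}'=\hat{x}$, $\lambda=\pi$ (with the phase shifted by $\pi/2$ to absorb the $i$ in $\rpi{\hat{x}}=i\rv{\pi}{\hat{x}}=X$), yielding \qc{crpi}. Your handling of the phase convention is also the same as the paper's, so the proof is correct and essentially identical.
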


This realization provides sufficient motivation to use controlled $\pi$-rotations as elementary gates when constructing large circuits, especially if the goal is to minimize the count of CNOT gates. {We demonstrate this in the next section.}
{
\\

\section{Controlled SU(2)/U(2) gates}\label{sec:Controlled_U2}
In this section we utilize the Hermitian gate framework in order to implement controlled gates using a small CNOT count. We emphasize the simplicity in which our structures are developed, while reproducing, generalizing or improving state-of-the-art methods.
Starting with a controlled $W\in SU(2)$ ($CW$), we can implement $W=\rv{\lambda}{\hat{v}}$ as $\rpisub{2}\rpisub{1}$ with $\hat{v_1},\hat{v_2}$ perpendicular to $\hat{v}$, and $\ang{\hat{v_1}}{\hat{v_2}}=\tfrac{\lambda}{2}$
using \lem{2_rpi}. Adding a control to each of the $\Pi$ gates only requires a single CNOT according to \thm{1CNOT_iff}. \qc{cu_rpi} therefore requires two CNOT gates, similarly to the well known decomposition from \cite{barenco_elementary_1995}. Clearly, any controlled $U\in U(2)$ ($CU$) gate can be implemented by applying a phase gate on the control qubit.
\[
\scalebox{0.8}{
\Qcircuit @C=1.0em @R=0.3em @!R {
	 	\nghost{{c} :  } & \lstick{{c} :  } & \ctrl{1} & \qw \\
	 	\nghost{{t} :  } & \lstick{{t} :  } & \gate{\mathrm{W}} & \qw \\
}
\hspace{5mm}\raisebox{-3.5mm}{=}\hspace{0mm}
\Qcircuit @C=0.2em @R=0.2em @!R { 
	 	\nghost{{c :} :  } & \lstick{{c :}} & \ctrl{1} & \ctrl{1} & \qw \\
	 	\nghost{{t :} :  } & \lstick{{t :}} & \rpigatesub{1} & \rpigatesub{2} & \qw \\
 }
 }
 \qcref{cu_rpi}
\]
In order to add another control, we can use 
\lem{sq_trans} to apply an axis transformation and implement $\rpisub{2}$ as $\rpisub{3}\rpisub{1}\rpisub{3}$, defined by $\hat{v}_3$ on the same plane as $\hat{v}_1,\hat{v}_2$, and satisfying $\ang{\hat{v_1}}{\hat{v_3}}=\ang{\hat{v_3}}{\hat{v_2}}=\tfrac{\lambda}{4}$. As can be seen in \qc{mcrpi_base}, qubit $c_1$ controls the axis transformation, such that for the $\ket{0}_{c_1}$ state, the controlled-$\rpisub{1}$ gates cancel out since they are Hermitian, and for the $\ket{1}_{c_1}$ state, \qc{cu_rpi} is applied on $c_2,t$. 
\[
\scalebox{0.8}{
\Qcircuit @C=1.0em @R=0.1em @!R { 
	 	\nghost{{c_1 :} :  } & \lstick{{c_1 :}}  & \ctrl{1} & \qw \\
	 	\nghost{{c_2 :} :  } & \lstick{{c_2 :}}  & \ctrl{1} & \qw \\
	 	\nghost{{t :} :  } & \lstick{{t :}} & \gate{\mathrm{W}} & \qw \\
 }
\hspace{5mm}\raisebox{-6mm}{=}\hspace{0mm}
\Qcircuit @C=0.2em @R=0.0em @!R { 
	 	\nghost{{c_1 :} :  } & \lstick{{c_1 :}} & \qw & \ctrl{2} & \qw & \ctrl{2} & \qw  \\
	 	\nghost{{c_2 :} :  } & \lstick{{c_2 :}} & \ctrl{1} & \qw & \ctrl{1} & \qw  & \qw \\
	 	\nghost{{t :} :  } & \lstick{{t :}} & \rpigatesub{1} & \rpigatesub{3} & \rpigatesub{1} & \rpigatesub{3} & \qw \\
 }
 }
 \qcref{mcrpi_base}
\]
By adding a controlled phase gate on $c_1,c_2$ this structure can implement {\em any} controlled-controlled $U(2)$ ($C^2U$) gate using six CNOTs, i.e. at the same cost as a Toffoli.
Now, we wish to implement the $C^2U$ gate in linear nearest-neighbor (LNN) connectivity. The control condition can be rewritten as $\{(c_1 \oplus c_2),c_2\}=\{0,1\}$ instead of $\{c_1,c_2\}=\{1,1\}$. We would like to use $c_2$ and $c_1 \oplus c_2$ to control our $\Pi$ gates, in order to remove the need to swap $c_1$ towards the target. We introduce the gate $\MCO{\Pi}{\{c_1 \oplus c_2\}}{t} = \MCO{X}{c_1}{c_2}\MCO{\Pi}{c_2}{t}\MCO{X}{c_1}{c_2}$ for this task. This time, we use the axis transformation to replace $\rpisub{2}$ with $\rpisub{3}\rpisub{2}\rpisub{3}$. In this case, $\rpisub{3}$ is controlled by $c_1 \oplus c_2$ and will change $\rpisub{2}$ to $\rpisub{1}$ only if $c_1 \oplus c_2 = 1$. Therefore, $\MCO{\rpisub{3}}{\{c_1 \oplus c_2\}}{t}\MCO{\rpisub{2}}{c_2}{t}\MCO{\rpisub{3}}{\{c_1 \oplus c_2\}}{t}\MCO{\rpisub{1}}{c_2}{t}$ implements the $C^2W$ gate. This can be used to implement any $C^2U$ gate by adding a controlled phase gate as $\MCO{\Pi_4}{c_1}{c_2}\sqO{P}{c_1}\MCO{X}{c_1}{c_2}$, with some $\hat{v}_4\perp\hat{z}$ and $P$ a single-qubit phase gate. Noting that two CNOT gates cancel out, we reach the 8 CNOT implementation in \qc{ccu_lnn}, again providing the same CNOT count as a Toffoli \cite{nemkov_efficient_2023,gwinner_benchmarking_2021,nakanishi_quantum-gate_2021,nakanishi_decompositions_2024,m_q_cruz_shallow_2024}, while generalizing the controlled operator to {\em any} U(2) for a specific target location.
\[
\scalebox{0.8}{
\Qcircuit @C=1.0em @R=0.1em @!R { 
	 	\nghost{{c_1 :} :  } & \lstick{{c_1 :}} & \ctrl{1} & \qw \\
	 	\nghost{{c_2 :} :  } & \lstick{{c_2 :}} & \ctrl{1} & \qw \\
	 	\nghost{{t :} :  } & \lstick{{t :}}  & \gate{\mathrm{U}} & \qw \\
 }
\hspace{5mm}\raisebox{-6mm}{=}\hspace{0mm}
\Qcircuit @C=0.1em @R=0.0em @!R { 
	 	\nghost{{c_1 :} :  } & \lstick{{c_1 :}} & \qw & \ctrl{1} & \qw & \ctrl{1} & \qw & \ctrl{1} & \gate{\mathrm{P}} & \ctrl{1} & \qw \\
	 	\nghost{{c_2 :} :  } & \lstick{{c_2 :}}  & \ctrl{1} & \targ & \ctrl{1} & \targ & \ctrl{1} & \targ & \ctrl{1} & \rpigatesub{4} & \qw \\
	 	\nghost{{t :} :  } & \lstick{{t :}} & \rpigatesub{1} & \qw & \rpigatesub{3} & \qw & \rpigatesub{2} & \qw & \rpigatesub{3} & \qw & \qw \\
 }
 }
 \qcref{ccu_lnn}
\]
In case \qc{ccu_lnn} is used in order to implement an $R_x(\lambda)$ gate with two controls, we can always choose $\rpisub{3}=Z$, and define $\hat{v}_1,\hat{v}_2$ accordingly on the $zy$ plane as $\hat{v}(\tfrac{\lambda}{4},\tfrac{\pi}{2})$ and $\hat{v}(\tfrac{\lambda}{4},-\tfrac{\pi}{2})$ respectively. Moreover, since $R_x\in SU(2)$, we get $\rpisub{4}=X$, and the $P$ gate can be removed.  Using \lem{MC_transform}, we can decompose $\MCO{\rpisub{1}}{c_2}{t} = \sqO{\rpisub{5}}{t}\MCO{Z}{c_2}{t}\sqO{\rpisub{5}}{t}$, and $\MCO{\rpisub{2}}{c_2}{t} = \sqO{\rpisub{6}}{t}\MCO{Z}{c_2}{t}\sqO{\rpisub{6}}{t}$ with $\hat{v}_5=\hat{v}(\tfrac{\lambda}{8},\tfrac{\pi}{2})$ and $\hat{v}_6=\hat{v}(\tfrac{\lambda}{8},-\tfrac{\pi}{2})$. By adding a control to the axis transformation gates $\rpisub{5},\rpisub{6}$ we achieve \qc{cccrx_lnn}. We note that these controlled $\Pi$ gates can be decomposed using one CZ and two $R_x$ gates using \lem{Cpi_zz_yy}. 
\[
 \resizebox{0.9\linewidth}{!}{
\Qcircuit @C=1.0em @R=0em @!R { 
	 	\nghost{{c_1 :} :  } & \lstick{{c_1 :}} & \ctrl{1} & \qw \\
	 	\nghost{{c_2 :} :  } & \lstick{{c_2 :}} & \ctrl{1} & \qw \\
	 	\nghost{{t :} :  } & \lstick{{t :}} & \gate{\mathrm{R_x}} & \qw \\
	 	\nghost{{c_3 :} :  } & \lstick{{c_3 :}}& \ctrl{-1} & \qw \\
 }
 \hspace{5mm}\raisebox{-8mm}{=}\hspace{0mm}
\Qcircuit @C=0.4em @R=0.0em @!R { 
	 	\nghost{{c_1 :} :  } & \lstick{{c_1 :}} & \qw & \qw & \ctrl{1} & \qw & \ctrl{1} & \qw & \ctrl{1} & \qw & \ctrl{1} & \qw \\
	 	\nghost{{c_2 :} :  } & \lstick{{c_2 :}} & \qw & \ctrl{1} & \targ & \ctrl{1} & \targ & \ctrl{1} & \targ & \ctrl{1} & \targ & \qw \\
	 	\nghost{{t :} :  } & \lstick{{t :}} & \rpigatesub{5} & \control\qw & \rpigatesub{5} & \control\qw & \rpigatesub{6} & \control\qw & \rpigatesub{6} & \control\qw & \qw & \qw \\
	 	\nghost{{c_3 :} :  } & \lstick{{c_3 :}} & \ctrl{-1} & \qw & \ctrl{-1} & \qw & \ctrl{-1} & \qw & \ctrl{-1} & \qw & \qw & \qw \\
 }
 }
 \qcref{cccrx_lnn}
\]
It can be noticed that the rightmost $\MCO{X}{c_1}{c_2}\MCO{Z}{c_2}{t}\MCO{X}{c_1}{c_2}$ is equivalent to $\MCO{Z}{c_1}{t}\MCO{Z}{c_2}{t}$ \cite{maslov_depth_2022}. 
We can remove these gates when implementing a $C^3R_x$ up-to a relative phase, which can therefore be implemented using 9 CNOT gates. In case the target is located at the bottom, the gate requires 11 CNOT gates as \qc{cccrx_lnn_bot}. Since any $R_x$ can be decomposed as $Z\Pi_{\bar{x}}$, for $\Pi_{\bar{x}}$ defined according to \lem{2_rpi} as a $\pi$-rotation about an axis on the $zy$ plane, we note that \qc{cccrx_lnn_bot} implements any chosen $C^3\Pi_{\bar{x}}$ up to a relative phase as well.
\[
\scalebox{0.8}{
\Qcircuit @C=0.2em @R=0em @!R { 
	 	\lstick{} & \ctrl{1} & \multigate{3}{\mathrm{\Delta'}} & \qw \\
	 	\lstick{} & \ctrl{1} & \ghost{\mathrm{\Delta'}} & \qw \\
        \lstick{} & \ctrl{1} &\ghost{\mathrm{\Delta'}} & \qw \\
	 	\lstick{} & \gate{\mathrm{R_x}} & \ghost{\mathrm{\Delta'}} & \qw \\
 }
 \hspace{2mm}\raisebox{-7.4mm}{=}\hspace{2mm}
 \Qcircuit @C=0.2em @R=0em @!R { 
	 	\lstick{} & \ctrl{1} & \multigate{3}{\mathrm{\Delta}} & \qw \\
	 	\lstick{} & \ctrl{1} & \ghost{\mathrm{\Delta}} & \qw \\
        \lstick{} & \ctrl{1} &\ghost{\mathrm{\Delta}} & \qw \\
	 	\lstick{} & \gate{\mathrm{\Pi_{\bar{x}}}} & \ghost{\mathrm{\Delta}} & \qw \\
 }
 \hspace{2mm}\raisebox{-7mm}{=}\hspace{2mm}
\Qcircuit @C=0.4em @R=0.0em @!R { 
	 	\lstick{} & \qw & \qw & \qw & \ctrl{1} & \qw & \ctrl{1} & \qw & \qw & \qw & \qw \\
	 	\lstick{} & \qw & \qw & \ctrl{1} & \targ & \ctrl{1} & \targ & \ctrl{1} & \qw & \qw & \qw  \\
	 	\lstick{} & \qswap & \rpigatesub{5} & \control\qw & \rpigatesub{5} & \control\qw & \rpigatesub{6} & \control\qw & \rpigatesub{6} & \qswap & \qw  \\
	 	\lstick{} &\qswap \qwx[-1] & \ctrl{-1} & \qw & \ctrl{-1} & \qw & \ctrl{-1} & \qw & \ctrl{-1} & \qswap \qwx[-1] & \qw 
        \gategroup{3}{2}{4}{3}{0.5em}{--} 
        \gategroup{3}{9}{4}{10}{0.5em}{--} 
 }
 }
 \qcref{cccrx_lnn_bot}
\]
with $\Delta,\Delta'$ as the relative phase (diagonal) gates. The boxed gates only require two CNOTs each as \qc{SWAP_CPi}.
\[
\scalebox{0.8}{
\Qcircuit @C=1.0em @R=0.3em @!R {
	 	\lstick{} & \qswap & \rpigatesub{} & \qw \\
	 	\lstick{} & \qswap \qwx[-1] & \ctrl{-1} & \qw \\
}
\hspace{2mm}\raisebox{-3.5mm}{=}\hspace{2mm}
\Qcircuit @C=0.5em @R=0.3em @!R {
	 	\lstick{} & \qw & \qswap & \targ & \gate{\mathrm{\Pi'}} & \qw \\
	 	\lstick{} & \gate{\mathrm{\Pi'}} & \qswap \qwx[-1] & \ctrl{-1} & \qw  & \qw \\
}
\hspace{2mm}\raisebox{-3.5mm}{=}\hspace{2mm}
\Qcircuit @C=0.2em @R=0.3em @!R {
	 	\lstick{} & \qw & \targ & \ctrl{1} & \gate{\mathrm{\Pi'}} & \qw \\
	 	\lstick{} & \gate{\mathrm{\Pi'}} & \ctrl{-1} & \targ & \qw  & \qw \\
}
 }
 \qcref{SWAP_CPi}
\]

Noting that by choosing the target rotation as $R_x(\pi)$, or equivalently $\rpisub{\bar{x}}=Y$, \qc{cccrx_lnn_bot} implements a 3-controlled Toffoli up-to a relative phase ($C^3X\text{-}\Delta$) at the same CNOT count reported by \cite{nemkov_efficient_2023}.
Our generalization of the $C^3X\text{-}\Delta$ gate to $C^3\rpisub{\bar{x}}\text{-}\Delta$  proves to be useful in the construction of the 4-controlled Toffoli ($C^4X$). 
First, we implement the $C^4R_x$ gate as \qc{ccccrx_lnn}. It can be noted that the relative phase gates $\Delta,\Delta^\dagger$ can commute with the CZ gate between them and cancel out. Ignoring these gates, this circuit is similar to \qc{mcrpi_base}, with an axis transformation controlled by three qubits. Specifically, in this case $\Pi_1$ is replaced with $Z$, and $\Pi_3$ with $\Pi_{\bar{x}}$, thus allowing to implement any four-controlled $R_x$ gate.
\[
\scalebox{0.8}{
\Qcircuit @C=1.0em @R=0em @!R { 
	 	\lstick{} & \ctrl{1}  & \qw \\
	 	\lstick{} & \ctrl{1}  & \qw \\
        \lstick{} & \ctrl{1}  & \qw \\
	 	\lstick{} & \gate{\mathrm{R_x}} &  \qw \\
        \lstick{} & \ctrl{-1}  & \qw \\
 }
 \hspace{2mm}\raisebox{-12mm}{=}\hspace{2mm}
\Qcircuit @C=1.0em @R=0em @!R { 
	 	\lstick{} & \qw & \ctrl{1} & \multigate{3}{\mathrm{\Delta}} & \qw & \multigate{3}{\mathrm{\Delta^\dagger}} & \ctrl{1}  & \qw \\
	 	\lstick{} & \qw & \ctrl{1} & \ghost{\mathrm{\Delta}} & \qw & \ghost{\mathrm{\Delta^\dagger}} & \ctrl{1}  & \qw \\
        \lstick{} & \qw  & \ctrl{1} &\ghost{\mathrm{\Delta}} & \qw & \ghost{\mathrm{\Delta^\dagger}} & \ctrl{1} & \qw \\
	 	\lstick{} & \control\qw & \gate{\mathrm{\Pi_{\bar{x}}}} & \ghost{\mathrm{\Delta}} & \control\qw & \ghost{\mathrm{\Delta^\dagger}} & \gate{\mathrm{\Pi_{\bar{x}}}}  & \qw \\
        \lstick{} & \ctrl{-1} & \qw & \qw & \ctrl{-1} & \qw & \qw  & \qw
        \gategroup{1}{3}{4}{4}{0.3em}{--} 
        \gategroup{1}{6}{4}{7}{0.3em}{--} 
 }
 }
 \qcref{ccccrx_lnn}
\]
Noting that this gate can be used for any chosen $R_x$ rotation, we can choose $R_x(\pi)=-iX$. The $-i$ phase can be removed by applying a $C^3S$ gate on the control qubits. We implement the $C^4X$ gate as \qc{ccccx_lnn}, applying SWAP gates to place the target of our $C^4R_x$ gate at the bottom. Each SWAP gate only requires two CNOT gates instead of three, since a CNOT gate applied on the target of the $C^4R_x$ gate commutes with it.
\[
\scalebox{0.8}{
\Qcircuit @C=1.0em @R=0.8em @!R { 
	 	\lstick{} & \ctrl{1}  & \qw \\
	 	\lstick{} & \ctrl{1}  & \qw \\
        \lstick{} & \ctrl{1}  & \qw \\
        \lstick{} & \ctrl{1}  & \qw \\
	 	\lstick{} & \targ &  \qw \\
 }
 \hspace{2mm}\raisebox{-13mm}{=}\hspace{2mm}
 \Qcircuit @C=0.5em @R=0em @!R { 
	 	\lstick{} & \ctrl{1} & \ctrl{1} & \qw\\
	 	\lstick{} & \ctrl{1}  & \ctrl{1} & \qw\\
        \lstick{}  & \ctrl{1}  & \ctrl{1} & \qw\\
	 	\lstick{}  & \ctrl{1} & \gate{\mathrm{S}} &  \qw \\
        \lstick{}  & \gate{\mathrm{R_x(\pi)}}   & \qw & \qw\\
 }
 \hspace{2mm}\raisebox{-13mm}{=}\hspace{2mm}
 \Qcircuit @C=0.2em @R=0em @!R { 
	 	\lstick{}&  \qw  & \qw & \ctrl{1} &  \qw  & \qw & \ctrl{1} & \qw\\
	 	\lstick{}&  \qw  & \qw & \ctrl{1} &  \qw  & \qw & \ctrl{1} & \qw\\
        \lstick{}&  \qw  & \qw & \ctrl{1}&  \qw   & \qw & \ctrl{1} & \qw\\
	 	\lstick{} & \targ & \ctrl{1} & \gate{\mathrm{R_x(\pi)}} & \ctrl{1} & \targ& \gate{\mathrm{S}} &  \qw \\
        \lstick{} & \ctrl{-1} & \targ & \ctrl{-1} & \targ & \ctrl{-1}  & \qw & \qw\\
 }
 }
 \qcref{ccccx_lnn}
\]
Finally, noting that the leftmost CZ in \qc{ccccrx_lnn}, and one CNOT gate from \qc{ccccx_lnn} can be merged to one controlled $\pi$-rotation as $\MCO{Z}{c}{t}\MCO{X}{c}{t} = \MCO{iY}{c}{t}$, these can be implemented using one CNOT gate. We get that the $C^4R_x$ gate with a target at the bottom requires $27$ CNOT gates, as it uses two $C^3\Pi_{\bar{x}}\text{-}\Delta$ gates and five $C\Pi$ gates. Since the $C^3S$ gate can be implemented using 18 CNOT gates as a $C^3V$ \cite{nemkov_efficient_2023} with two Hadamard gates, we managed to reduce the cost of the $C^4X$ (aka the 5q Toffoli) gate in LNN connectivity from 48 \cite{nemkov_efficient_2023,allende_synthesis_2024} to 45. As the best known decomposition of the $C^4X$ gate without ancilla in unrestricted connectivity requires $30$ CNOT gates \cite{nemkov_efficient_2023,shende_synthesis_2006}, we effectively reduce the CNOT overhead required to to map this gate to LNN connectivity by $16.6\%$.}
{We show in \cite{zindorf_efficient_2025,zindorf_multi-controlled_2025} that these methods result in reduced CNOT count for any number of controls, both for LNN and for unrestricted connectivity. }
\\

\section{Controlled U(4) gates}\label{sec:Controlled_U4}\label{ssec:implementations} 
    In this section we focus on the implementation of controlled two-qubit operators $\MCO{V}{c}{\{t_1,t_2\}}$ with any $V\in U(4)$. Our goal is to minimize the number of CNOT gates used. The approach we take demonstrates the ability to "add a control" to a given circuit (in this case an arbitrary two-qubit gate) by converting single qubit rotations to Hermitian gates, and adding a control to these gates at a reduced cost using \thm{1CNOT_iff}.
First we show that the following structure can implement any $V\in U(4)$ gate up to a global phase:
\[
 \resizebox{0.85\linewidth}{!}{
\Qcircuit @C=1.0em @R=0.2em @!R { 
	 	\nghost{{t}_{1} :  } & \lstick{{t}_{1} :  } & \rpigatesubu{z}{1} & \gate{\mathrm{U_1}} & \ctrl{1} & \rpigatesubu{y}{3} & \ctrl{0} & \rpigatesubu{y}{5} & \ctrl{1} & \gate{\mathrm{U^\dagger_1}} & \rpigatesub{6}& \qw \\
	 	\nghost{{t}_{2} :  } & \lstick{{t}_{2} :  } & \rpigatesubu{z}{2} & \gate{\mathrm{U_2}} & \gate{\mathrm{iY}}  & \rpigatesubu{z}{4} & \ctrl{-1} & \qw & \targ & \gate{\mathrm{U^\dagger_2}} & \rpigatesub{7} & \qw \\
 }
 }
 \qcref{su4_ours}
\]
With $U_1,U_2\in SU(2)$, and $\rpi{\hat{v}_{j\in\{1,2,..,7\}}}$ gates are marked as $\rpisubu{j}{\sigma}$ if it is known that $\hat{v}_j \perp \hat{\sigma}$ or as $\rpisub{j}$ otherwise. We note that the $\rpisubu{j}{\sigma}$ gates require one parameter, the $\rpisub{j}$ require two parameters, and each $SU(2)$ gate and its inverse use the same three parameters, therefore this circuit uses $15$ parameters which is the minimal requirement for a $U(4)$ operator disregarding the global phase.
\begin{lemma}\label{lem:SU4_pi_rot}
    Any $V\in U(4)$ gate can be implemented up to a global phase using \qc{su4_ours}, such that $\hat{v}_3,\hat{v}_5$ are on the $xz$ plane, and $\hat{v}_1,\hat{v}_2,\hat{v}_4$ are on the $xy$ plane.
\end{lemma}
\begin{proof}
The following circuit is known to implement any $V\in U(4)$ up to a global phase:
\[
\scalebox{0.8}{
\Qcircuit @C=1.0em @R=0.2em @!R { 
	 	\nghost{{t}_{1} :  } & \lstick{{t}_{1} :  } & \gate{\mathrm{D}} & \ctrl{1} & \rvgate{\theta_1}{\hat{y}} & \targ & \rvgate{\theta_2}{\hat{y}} & \ctrl{1} & \gate{\mathrm{B}} & \qw \\
	 	\nghost{{t}_{2} :  } & \lstick{{t}_{2} :  } & \gate{\mathrm{C}} & \targ & \rvgate{\theta_3}{\hat{z}} & \ctrl{-1} & \qw & \targ & \gate{\mathrm{A}} & \qw \\
 }
 }
 \qcref{su4_gen}
\]
where the angles $\theta_1,\theta_2,\theta_3$, and the operators $A,B,C,D\in SU(2)$ can be found as shown in \cite{shende_minimal_2004,vatan_optimal_2004,vidal_universal_2004} for any chosen $V$.
The middle CNOT gate can be replaced by $\sqO{H}{t_1}\MCO{Z}{t_2}{t_1}\sqO{H}{t_1}$, and the first CNOT by $\sqO{S^\dagger}{t_1}\sqO{\rpisub{S}}{t_2}\MCO{iY}{t_1}{t_2}\sqO{\rpisub{S}}{t_2}$ from \lem{MC_transform}.
The following circuit therefore implements $V$ up to a global phase:
\[
 \resizebox{0.85\linewidth}{!}{
\Qcircuit @C=0.5em @R=0.2em @!R { 
	 	\nghost{{t}_{1} :  } & \lstick{{t}_{1} :  } & \gate{\mathrm{D'}}  & \ctrl{1} & \rvgate{\theta_1}{\hat{y}} &  \gate{\mathrm{H}} & \ctrl{0} &  \gate{\mathrm{H}} & \rvgate{\theta_2}{\hat{y}} & \ctrl{1} & \gate{\mathrm{B}} & \qw \\
	 	\nghost{{t}_{2} :  } & \lstick{{t}_{2} :  } & \gate{\mathrm{C'}}  & \gate{\mathrm{iY}} & \rpigatesub{S} & \rvgate{\theta_3}{\hat{z}} & \ctrl{-1} & \qw & \qw & \targ & \gate{\mathrm{A}} & \qw \\
 }
 }
 \qcref{su4_gen2}
\]
with $D':=S^\dagger D$, and $C':=\rpisub{S}C$.
The gate $\rv{\theta_3}{\hat{z}}$ can be decomposed as $\rpi{\hat{v}_4}\rpisub{S}$ with $\hat{v}_4 = \hat{v}(\frac{\pi}{2},\frac{\pi}{4}+\frac{\theta_3}{2})$ using
\lem{2_rpi}.
Similarly, $H\rv{\theta_1}{\hat{y}} = \rpi{\hat{v}_3}$, and $\rv{\theta_2}{\hat{y}}H = \rpi{\hat{v}_5}$ with $\hat{v}_3=\hat{R}_{\hat{y}}(-\frac{\theta_1}{2})\hat{h}$, $\hat{v}_5=\hat{R}_{\hat{y}}(\frac{\theta_2}{2})\hat{h}$ such that $\rpi{\hat{h}}=H$  for $\hat{h} = \hat{v}(\frac{\pi}{4},0)$. We note that $\hat{v}_3,\hat{v}_5$ are on the $xz$ plane, and $\hat{v}_4$ is on the $xy$ plane. \\
Finally, we define $U_{BD}:=BD'$. This operator can be expressed as $U_{BD} = \rv{\theta_{B}}{\hat{v}_{B}}$ up to a global phase with a unit vector $\hat{v}_B$ and an angle $\theta_B$. $U_{BD}$ can therefore be decomposed using \lem{2_rpi} as $\rpi{\hat{v}_6}\rpi{\hat{v}_1}$. The vector $\hat{v}_1$ can be chosen to be perpendicular to both $\hat{v}_{B}$ and $\hat{z}$ such that it is on the $xy$ plane, and  $\hat{v}_6=\hat{R}_{\hat{v}_B}(\frac{\theta_B}{2})\hat{v}_1$. Since $\rpicl$ gates are Hermitian we can write $\rpi{\hat{v}_6}BD'\rpi{\hat{v}_1}=I$. 
Defining $U_1=D'\rpi{\hat{v}_1}$ up to a global phase, such that $U_1\in SU(2)$, therefore provides $U^\dagger_1=\rpi{\hat{v}_6}B$. Repeating this for $A,C'$ produces $U_2,U^\dagger_2,\hat{v}_2,\hat{v}_7$ such that $\hat{v}_2$ is on the $xy$ plane.
\end{proof}

\qc{CU4_a2a} is achieved by replacing each $\Pi$ gate in \qc{su4_ours} with its controlled version.
\[
 \resizebox{0.85\linewidth}{!}{
\Qcircuit @C=0.8em @R=0.15em @!R { 
	 	\nghost{{c} :  } & \lstick{{c} :  } & \ctrl{1}  & \gate{\mathrm{P(\phi)}} & \qw & \ctrl{1} & \qw & \ctrl{1} & \qw & \qw & \ctrl{1}& \qw\\
	 	\nghost{{t}_{1} :  } & \lstick{{t}_{1} :  } & \rpigatesubu{z}{1}\ar @{-} [1,0] & \gate{\mathrm{U_1}} & \ctrl{1} & \rpigatesubu{y}{3}\ar @{-} [1,0] & \ctrl{0} & \rpigatesubu{y}{5} & \ctrl{1} & \gate{\mathrm{U^\dagger_1}} & \rpigatesub{6}\ar @{-} [1,0] & \qw \\
	 	\nghost{{t}_{2} :  } & \lstick{{t}_{2} :  } & \rpigatesubu{z}{2} & \gate{\mathrm{U_2}} & \gate{\mathrm{iY}}  & \rpigatesubu{z}{4} & \ctrl{-1} & \qw & \targ & \gate{\mathrm{U^\dagger_2}} & \rpigatesub{7} & \qw \\
 }
 }
 \qcref{CU4_a2a}
\]

 We show in \apx{controlled_u4_apx} that this circuit, which requires ten controlled $\pi$-rotations, can implement any $\MCO{V}{c}{\{t_1,t_2\}}$ gate. Circuits for LNN restricted
connectivity are provided as well.
In addition, we present methods to efficiently convert a circuit which was designed using controlled $\pi$-rotation gates to other sets of universal gates. These methods are then used to decompose our Controlled $U(4)$ implementation to CNOT gates, along with arbitrary rotations, $\pi$ rotations or $\{R_{\hat{z}},R_{\hat{y}}\}$ gates.
The results are summarized in \tab{csu4_costs}.

\begin{table}[h]
    \centering
    \begin{tabular}{ |p{2cm}||p{1.1cm}|p{1.7cm}|  }
 \hline
 Connectivity&CNOT &$R_{\hat{v}}$/$\rpicl$/$R_{\hat{z},\hat{y}}$ \\
 \hline
 All-to-all  & 10   & 15/20/25 \\
 LNN             & 13   & 15/20/25 \\
 \hline    \end{tabular}   \caption{Counts of CNOT gates and single-qubit gates from three universal gate sets required for our implementation of any Controlled-$U(4)$ gate in all-to-all or LNN connectivity.}
\label{tab:csu4_costs}
\end{table}
\section{Conclusion}
{In this paper, we explored the possibility and benefits of doing quantum computation
purely by using single and two qubit Hermitian gates.} 
Specifically, $\pi$-rotations about two fixed axes, along with the CNOT gate are universal for quantum computation. When arbitrary axes are available, any single qubit gate can be decomposed as two $\pi$-rotations.
{As shown here, this decomposition is particularly beneficial for single-qubit state preparation in the presence of systematic amplitude errors, while an alternative sequence of four $\pi$-rotations allows to improve the fidelity of arbitrary single-qubit gates.}

In addition, we have shown that two $\Pi$ gates can be used in order to transform the rotation axis of multi-controlled $U(2)$ gates. It follows that the CNOT count of an $n$-controlled $SU(2)$ gate is independent of the choice of rotation axis. Therefore, the CNOT count of $20n$ reported by \cite{vale_circuit_2024} immediately reduces to $16n$, which has been shown to be sufficient for specific chosen axes by the same paper.
This axis transformation also allows to implement any singly-controlled $\Pi$ gate using one CNOT gate.
 
The universality of the controlled $\Pi$ gates{, along with their reduced CNOT count,} provides a convenient framework for circuit optimization and compilation. We have demonstrated some benefits of this approach while optimizing controlled $U(4)$ gates, resulting in a low number of CNOT gates for both all-to-all and LNN architectures. In the process, we have provided methods that allow one to efficiently convert a circuit which has been constructed using controlled $\Pi$ gates to another set of gates.
{In addition, we used this framework to implement various multi-controlled single-target gates with a small number of controls, while either reproducing or generalizing state-of-the-art methods. For the case of the 5q Toffoli gate without ancilla in LNN connectivity, we managed to reduce the CNOT count from $48$ \cite{nemkov_efficient_2023} to $45$. This reduction becomes especially advantageous in circuits which use many instances of such gates.
Additional cases in which the use of Hermitian gates can provide efficient implementations of quantum operators have been covered in \cite{zindorf_efficient_2025,zindorf_multi-controlled_2025}.}

Focusing on Hermitian gates may show benefits when implemented directly as $\pi$ rotations using composite pulses, in addition to providing convenient tools for circuit optimization tasks. 

{For future research, one may consider the situation in which the strength of two qubit interactions can be so adjusted that the entangling part/s of a CNOT require the same time as a $\pi$ rotation on a single qubit. Under these circumstances, as local gates can always be accomplished as $\pi$ rotations, every operation can be accomplished by repeating the same pulse. Thus, $\pi$ pulses applied by a control machine may be applied as a square wave with constant frequency, similarly to a clock used for classical computation. This can potentially make it easier to establish a clocked version of quantum computation.
}
\\
\\
\textbf{Acknowledgements}\\
This work was supported by the Engineering and Physical Sciences Research Council [grant numbers EP/R513143/1, EP/T517793/1].

\newpage
\appendix

\section*{Appendix}\label{app:appendix}
\section{Controlled U(4) gates construction}\label{apx:controlled_u4_apx}
We provide a constructive proof for our controlled-$U(4)$ structure which is achieved by 'adding a control' to \qc{su4_ours}.
\begin{lemma}\label{lem:is_SU4}
    Any operator which can be implemented exactly using \qc{su4_ours} is in $SU(4)$.
\end{lemma}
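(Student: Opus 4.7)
The plan is a direct determinant computation using multiplicativity $\det(AB)=\det(A)\det(B)$ and the tensor-product rule $\det(A\otimes B)=\det(A)^m\det(B)^n$ for $A\in\mathbb{C}^{n\times n}$ and $B\in\mathbb{C}^{m\times m}$. Two facts I would record up front: any $U\in SU(2)$ has $\det U=1$, and $\det(\rpi{\hat{v}})=\det(i\rv{\pi}{\hat{v}})=i^2\cdot 1=-1$ as a $2\times 2$ matrix (the $i^2$ coming from pulling the scalar out of a $2\times 2$).

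Next I would show that every single-qubit layer of \qc{su4_ours} contributes $+1$ to the overall determinant. Each such layer has the form $A\otimes B$ with both factors drawn from $SU(2)\cup\{\rpi{\hat{v}}\}\cup\{I\}$, so the tensor-product rule yields $\det(A\otimes B)=\det(A)^2\det(B)^2=+1$: $SU(2)$ factors square to $1$, and each $\rpi{\hat{v}}$ contributes $(-1)^2=+1$. This disposes of the initial and final $\rpicl\otimes\rpicl$ pairs, the blocks $U_1\otimes U_2$ and $U_1^\dagger\otimes U_2^\dagger$, the middle pair $\rpi{\hat{v}_3}\otimes\rpi{\hat{v}_4}$, and the isolated $\rpi{\hat{v}_5}\otimes I$.

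It then remains to compute the determinants of the three entangling layers. By direct inspection, $\det(\mathrm{CNOT})=-1$ (its non-trivial $2\times 2$ block is $X$) and $\det(\mathrm{CZ})=-1$ (diagonal with a single $-1$). The crucial observation is that $iY=\left(\begin{smallmatrix}0&1\\-1&0\end{smallmatrix}\right)$ has determinant $+1$, so the block-diagonal $\MCO{iY}{t_1}{t_2}=\mathrm{diag}(I,iY)$ also has determinant $+1$; the $i$ prefactor supplies exactly the sign needed to cancel the $-1$ from $Y$ itself. Multiplying all contributions yields $(-1)\cdot(-1)\cdot(+1)=+1$, so any operator realised by \qc{su4_ours} lies in $SU(4)$. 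There is no real obstacle to the argument; the only delicate point worth flagging explicitly is that replacing controlled-$iY$ by controlled-$Y$ would flip the total determinant to $-1$ and push the circuit outside $SU(4)$, so the choice of $iY$ (rather than $Y$) in \qc{su4_ours} is exactly what makes the lemma true.
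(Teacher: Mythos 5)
Your proof is correct and follows essentially the same route as the paper: a multiplicative scalar bookkeeping over the gates of \qc{su4_ours}, hinging on the same key observations that the controlled-$iY$ has determinant $+1$ while CNOT and CZ each contribute $-1$. The paper organizes the computation as a product of global phases relative to $SU(2)$/$SU(4)$ representatives (seven $\rpicl$ gates give $i^7$, CNOT and CZ give $\sqrt{i}$ each, total $i^8=1$), whereas you track determinants directly via $\det(A\otimes B)=\det(A)^2\det(B)^2$; your version is marginally more economical, since unitarity together with $\det=+1$ is literally the definition of $SU(4)$ and all single-qubit layers then contribute a trivial $+1$, leaving only the three entangling gates to check.
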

\begin{proof} 
    Any circuit over two qubits which is composed solely of $SU(2)$ and $SU(4)$ operator implements an $SU(4)$ operator. 
    Therefore, we simply express each gate in \qc{su4_ours} as a $SU(2)$ or $SU(4)$ operator multiplied by a phase, and show that the total added phase equals to $1$. By definition, $\rpi{\hat{v}}=i\rv{\pi}{\hat{v}}$ such that $\rv{\pi}{\hat{v}}\in SU(2)$. The $\MCO{iY}{t_1}{t_2}$ gate is in $SU(4)$ since its determinant equals $1$. The CNOT and CZ gates have a determinant of $-1$, and therefore can be written as $\MCO{X}{t_1}{t_2}=\sqrt{i}G$ and $\MCO{Z}{t_1}{t_2}=\sqrt{i}F$ with $G,F\in SU(4)$. The rest of the gates are $SU(2)$ operators, and therefore, the total added phase is $i^7\sqrt{i}\sqrt{i}=1$.
\end{proof}

\begin{lemma}\label{lem:CU4_a2a_lem}
Any $\MCO{V}{c}{\{t_1,t_2\}}$ gate can be implemented with 10 CNOT gates using \qc{CU4_a2a}, s.t. $\hat{v}_3,\hat{v}_5$ are on the $xz$ plane, $\hat{v}_1,\hat{v}_2,\hat{v}_4$ are on the $xy$ plane, and $V\in U(4)$.
\end{lemma}
\begin{proof}
According to 
\lem{SU4_pi_rot} and \lem{is_SU4}, for any operator $V\in U(4)$, there exists an operator $U\in SU(4)$ which can be implemented exactly using \qc{su4_ours} such that $V=e^{i\phi}U$. Since the determinant of any $SU(4)$ operator equals $1$, the angle $\phi\in [0,2\pi)$ must satisfy $e^{i4\phi}=det(V)$. Thus, there are four options for the value of $\phi$, and it will be chosen as one which allows $U$ to be implemented exactly using \qc{su4_ours}.
The parameters defining $U_1,U_2\in SU(2)$, and $\hat{v}_{j\in[1,7]}$ can be found as shown in \lem{SU4_pi_rot} such that $\hat{v}_3,\hat{v}_5$ are on the $xz$ plane, $\hat{v}_1,\hat{v}_2,\hat{v}_4$ are on the $xy$ plane. We investigate the operator applied on $t_1,t_2$ by \qc{CU4_a2a} for each computational basis state of the control qubit $c$.

For the state $\ket{1}$ of the control qubit $c$, the operator applied on qubits $t_1,t_2$ is described by \qc{su4_ours}, multiplied by a phase $e^{i\phi}$, and is therefore $e^{i\phi}U=V$.

For the state $\ket{0}$ of the control qubit $c$, the operator applied on qubits $t_1,t_2$ by \qc{CU4_a2a} is as follows:
\[
 \resizebox{0.9\linewidth}{!}{
\Qcircuit @C=1.0em @R=0.2em @!R { 
	 	\nghost{{t}_{1} :  } & \lstick{{t}_{1} :  } & \gate{\mathrm{U_1}} & \ctrl{1}  & \ctrl{0}  & \ctrl{1} & \gate{\mathrm{U^\dagger_1}} & \qw \\
	 	\nghost{{t}_{2} :  } & \lstick{{t}_{2} :  } & \gate{\mathrm{U_2}} & \gate{\mathrm{iY}} & \ctrl{-1} & \targ  & \gate{\mathrm{U^\dagger_2}} & \qw \\
 }
\hspace{5mm}\raisebox{-5mm}{=}\hspace{0mm}
\Qcircuit @C=1.0em @R=0.2em @!R { 
	 	\nghost{{t}_{1} :  } & \lstick{{t}_{1} :  } & \gate{\mathrm{U_1}} & \ctrl{1}  & \gate{\mathrm{U^\dagger_1}} & \qw \\
	 	\nghost{{t}_{2} :  } & \lstick{{t}_{2} :  } & \gate{\mathrm{U_2}} & \gate{\mathrm{iXZY}} & \gate{\mathrm{U^\dagger_2}} & \qw \\
 }
 }
\]
Since $iXZY=I$, and ${U_1}^\dagger U_1={U_2}^\dagger U_2=I$ , this circuit implements the $4\times 4$ identity matrix.\\
Therefore, \qc{CU4_a2a} can implement any controlled $V\in U(4)$ operator. The circuit requires ten $\MCO{e^{i\psi}\rpicl}{c}{t}$ gates, each requires one CNOT according to \thm{1CNOT_iff}.
\end{proof}

We now provide implementations of the controlled $U(4)$ gate for LNN connectivity.

\begin{lemma}\label{lem:CU4_LNN_top}
    Any $\MCO{V}{c}{\{t_1,t_2\}}$ gate with $V\in U(4)$ can be implemented in LNN architecture over 3 neighbouring qubits using 13 CNOT gates if $c$ is the first or the last qubit.
\end{lemma}
\begin{proof}
According to \lem{CU4_a2a_lem}, any $\MCO{V}{c}{\{t_1,t_2\}}$ operator can be implemented using the structure described in \qc{CU4_a2a} if $c$ is the first qubit (if it is the last qubit then the structure is simply flipped). This structure requires three gates of the form $\MCO{\rpisub{b}}{c}{t_2}\MCO{\rpisub{a}}{c}{t_1}$. Each of these can be decomposed in LNN connectivity using 3 CNOT gates simply by decomposing the gates as \qc{crpi} and applying the identity $\MCO{X}{c}{t_1}\MCO{X}{c}{t_2}=\MCO{X}{t_1}{t_2}\MCO{X}{c}{t_1}\MCO{X}{t_1}{t_2}$. The circuit requires 4 additional CNOT gates which already operate on neighbouring qubits.
\end{proof}

\begin{lemma}\label{lem:CU4_LNN_mid}
    Any $\MCO{V}{c}{\{t_1,t_2\}}$ gate with $V\in U(4)$ can be implemented in LNN architecture over 3 neighbouring qubits using 13 CNOT gates if $c$ is the central qubit.
\end{lemma}
\begin{proof}
The labels of $c,t_1$ in \qc{CU4_a2a} can be swapped by applying a SWAP gate on on each side of the circuit.
 
 The gates $\MCO{\rpisub{1}}{t_1}{c}\MCO{\rpisub{2}}{t_1}{t_2}\MCO{\text{SWAP}}{t_1}{c}$ can be applied using 3 CNOT gates using $\MCO{\rpisub{2}}{t_1}{t_2}\MCO{\text{SWAP}}{t_1}{c} = \MCO{\text{SWAP}}{t_1}{c}\MCO{\rpisub{2}}{c}{t_2}$, and $\MCO{\rpisub{1}}{t_1}{c}\MCO{\text{SWAP}}{t_1}{c} = \sqO{\rpisub{1}'}{c}\MCO{X}{c}{t_1}\MCO{X}{t_1}{c} \sqO{\rpisub{1}'}{t1}$, with $\rpisub{1}'$ obtained using the decomposition of $\MCO{\rpisub{1}}{t_1}{c}$ as \qc{crpi}. The gates $\MCO{\text{SWAP}}{t_1}{c}\MCO{\rpisub{7}}{t_1}{t_2}\MCO{\rpisub{6}}{t_1}{c}$ are applied similarly. 
 The remaining controlled gates are decomposed as in \lem{CU4_LNN_top}.
\end{proof}

While $\pi$-rotations can be useful when implemented directly, some physical qubit implementations do not support such rotations, and in that case $\rpi{\hat{v}}$ gates must be implemented using the available set of basic gates. However, as demonstrated in this paper, the $\pi$-rotation formalism can be useful as a convenient framework for circuit design. In this section we provide a few methods which can be used in many cases in order to convert a circuit which was built using $\pi$-rotations to an equivalent circuit which uses gates from the group $\{CNOT,R_{\hat{y}},R_{\hat{z}}\}$, and use those methods in order to decompose the presented controlled $U(4)$ gates.

\begin{lemma}\label{lem:Cpi_dec_yz}
    Any $\MCO{\rpi{\hat{v}}}{C}{t}$ gate can be implemented as $\sqO{\rz(\phi)}{t}\sqO{\rv{\psi}{\hat{y}}}{t}\MCO{X}{C}{t}\sqO{\rv{-\psi}{\hat{y}}}{t}\sqO{\rz(-\phi)}{t}$ for some angles $\phi,\psi$.
\end{lemma}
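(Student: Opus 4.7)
The plan is to verify the proposed identity by inspecting the action of the circuit on the target qubit $t$ for each computational basis state of the control set $C$, mirroring the argument used in \lem{MC_transform}. Denote the right-hand side by $W$. When $C$ is not in state $\ket{1\dots 1}$, the gate $\MCO{X}{C}{t}$ acts as the identity on $t$, so the four single-qubit factors on $t$ telescope to $\rz(\phi)\rv{\psi}{\hat{y}}\rv{-\psi}{\hat{y}}\rz(-\phi)=I$. Hence $W$ already matches $\MCO{\rpi{\hat{v}}}{C}{t}$ on every ``off'' branch of the control register, with no constraint placed on $\phi$ or $\psi$.

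When $C=\ket{1\dots 1}$, the $\MCO{X}{C}{t}$ gate applies $X=\rpi{\hat{x}}$ to $t$, so the induced operator on $t$ is
\[
U(\phi,\psi):=\rz(\phi)\,\rv{\psi}{\hat{y}}\,X\,\rv{-\psi}{\hat{y}}\,\rz(-\phi).
\]
The remaining task is to exhibit $\phi,\psi$ such that $U(\phi,\psi)=\rpi{\hat{v}}$. The key fact I will invoke is the Pauli-conjugation identity (which is just \lem{1q_transform} specialized to $\lambda=\pi$, or equivalently the $SU(2)\to SO(3)$ double cover): conjugating $\rpi{\hat{w}}$ by any $SU(2)$ rotation produces $\rpi{R\hat{w}}$, where $R$ is the associated three-dimensional rotation. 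Applying this in two stages,
\[
\rv{\psi}{\hat{y}}\,X\,\rv{-\psi}{\hat{y}}=\rpi{\hat{v}_1},\qquad \hat{v}_1=\hat{R}_{\hat{y}}(\psi)\hat{x}=(\cos\psi,0,-\sin\psi),
\]
\[
\rz(\phi)\,\rpi{\hat{v}_1}\,\rz(-\phi)=\rpi{\hat{v}_2},\qquad \hat{v}_2=\hat{R}_{\hat{z}}(\phi)\hat{v}_1=(\cos\psi\cos\phi,\cos\psi\sin\phi,-\sin\psi).
\]
Therefore $U(\phi,\psi)=\rpi{\hat{v}_2}$.

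To close the proof, I note that the map $(\phi,\psi)\mapsto\hat{v}_2$ is exactly the standard spherical parameterization of the unit sphere (with $\psi$ shifted by $\pi/2$ from the usual polar angle). Writing the target axis as $\hat{v}=(\sin\theta\cos\phi',\sin\theta\sin\phi',\cos\theta)$, the choice $\psi=\theta-\tfrac{\pi}{2}$ and $\phi=\phi'$ yields $\hat{v}_2=\hat{v}$, so with these angles $U(\phi,\psi)=\rpi{\hat{v}}$ and the circuit $W$ implements $\MCO{\rpi{\hat{v}}}{C}{t}$. The main obstacle is bookkeeping: keeping the $SU(2)$ sign conventions and the orientation of the induced $SO(3)$ rotation mutually consistent. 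Once the axis-rotation identity is in hand, the lemma reduces to the surjectivity of the spherical coordinate map onto the unit sphere.
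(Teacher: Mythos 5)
Your proof is correct, but it takes a genuinely different route from the paper's. The paper first invokes \lem{MC_transform} with $\hat v'=\hat x$ to write $\MCO{\rpi{\hat{v}}}{C}{t}=\sqO{\rpi{\hat{v}_M}}{t}\MCO{X}{C}{t}\sqO{\rpi{\hat{v}_M}}{t}$, then expands the single conjugating gate in ZYX Euler angles, $\rpi{\hat{v}_M}=\rv{\phi}{\hat z}\rv{\psi}{\hat y}\rv{\zeta}{\hat x}$, uses Hermiticity of $\rpi{\hat{v}_M}$ to rewrite the same operator in reversed order with negated angles, and lets the two $\hat x$-rotations commute through the target of $\MCO{X}{C}{t}$ and cancel. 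You instead verify the proposed circuit directly on both control branches: the ``off'' branch telescopes to the identity, and on the ``on'' branch you compute the induced axis via the adjoint action $U(\hat w\cdot\vec\sigma)U^{\dagger}=(R_U\hat w)\cdot\vec\sigma$ and then solve for $(\phi,\psi)$ using spherical coordinates. Your computation checks out, including the signs: $\hat{R}_{\hat y}(\psi)\hat x=(\cos\psi,0,-\sin\psi)$ and the choice $\psi=\theta-\tfrac{\pi}{2}$, $\phi=\phi'$ do hit an arbitrary $\hat v$. One mislabel: the conjugation identity you use is \emph{not} \lem{1q_transform} specialized to $\lambda=\pi$ --- that lemma conjugates by a $\pi$-rotation, whereas you conjugate by the arbitrary rotations $\rv{\psi}{\hat y}$ and $\rz(\phi)$; the correct justification is the general $SU(2)\to SO(3)$ adjoint-action fact you cite in the same sentence, so nothing is actually broken, but the lemma reference should be dropped or replaced. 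What your route buys is explicitness: it yields closed-form angles for any given $\hat v$, whereas the paper's argument goes through the Euler decomposition of the midpoint axis $\hat v_M\in M(\hat v,\hat x)$ and leaves the angles implicit; what the paper's route buys is that it reuses its already-established transformation machinery and avoids introducing the adjoint-action identity as a new ingredient.
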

\begin{proof}
    According to \lem{MC_transform}, $\MCO{\rpi{\hat{v}}}{C}{t} = \sqO{\rpi{\hat{v}_M}}{t}\MCO{X}{C}{t}\sqO{\rpi{\hat{v}_M}}{t}$ with $\hat{v}_M\in M(\hat{v},\hat{x})$. Any operator can be decomposed using ZYX Euler (Tait–Bryan) 
    angles, and therefore, three angles $\phi,\psi,\zeta$ can be found such that $\rpi{\hat{v}_M} = 
    \rv{\phi}{\hat{z}}\rv{\psi}{\hat{y}}\rv{\zeta}{\hat{x}}$, and since $\rpi{\hat{v}_M}$ is Hermitian, $\rpi{\hat{v}_M} = \rv{-\zeta}{\hat{x}} 
    \rv{-\psi}{\hat{y}}
    \rv{-\phi}{\hat{z}}$ as well.
    The rotations around $\hat{x}$ can commute with the $\MCO{X}{C}{t}$ gate and cancel out, providing the required expression.
\end{proof}

The number of single qubit rotations can be reduced in the specific cases in which $\hat{v}$ is perpendicular to the $\hat{x}$, $\hat{y}$ or $\hat{z}$ axis.

\begin{lemma}\label{lem:Cpi_zz_yy}
If $\hat{v}=\hat{R}_{\hat{\sigma}}(\phi)\hat{\tau}$ for any unit vectors $\hat{\tau},\hat{\sigma}$ satisfying $\hat{\tau}\perp \hat{\sigma}$, then $\MCO{\rpi{\hat{v}}}{C}{t}=\sqO{\rv{\phi}{\hat{\sigma}}}{t}\MCO{\rpi{\hat{\tau}}}{C}{t}\sqO{\rv{-\phi}{\hat{\sigma}}}{t}$.
\end{lemma}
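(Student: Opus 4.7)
The plan is to first establish the identity at the single-qubit level, namely
\[
\rpi{\hat{v}}=\rv{\phi}{\hat{\sigma}}\,\rpi{\hat{\tau}}\,\rv{-\phi}{\hat{\sigma}},
\]
and then to lift this to the multi-controlled version by the same case analysis on the control set that was used in \lem{MC_transform}. If the single-qubit identity holds, then on the ``off'' basis states of $C$ the right-hand side of the claimed equality reduces to $\rv{\phi}{\hat{\sigma}}\,I\,\rv{-\phi}{\hat{\sigma}}=I$, while on the ``on'' state it applies $\rv{\phi}{\hat{\sigma}}\,\rpi{\hat{\tau}}\,\rv{-\phi}{\hat{\sigma}}=\rpi{\hat{v}}$, matching $\MCO{\rpi{\hat{v}}}{C}{t}$ in both cases.

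For the single-qubit identity, my approach is to invoke \lem{2_rpi}. Since $\hat{\tau}\perp\hat{\sigma}$, the hypotheses of \lem{2_rpi} allow me to choose $\hat{v}_1=\hat{\tau}$ in the two-$\pi$-rotation decomposition of $\rv{\phi}{\hat{\sigma}}$, giving
\[
\rv{\phi}{\hat{\sigma}}=\rpi{\hat{v}_m}\,\rpi{\hat{\tau}},\qquad \hat{v}_m=\hat{R}_{\hat{\sigma}}\!\bigl(\tfrac{\phi}{2}\bigr)\hat{\tau}.
\]
Taking the adjoint, and using that each $\rpicl$-gate is Hermitian, yields $\rv{-\phi}{\hat{\sigma}}=\rpi{\hat{\tau}}\,\rpi{\hat{v}_m}$. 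Substituting both expressions into the conjugation and cancelling the two adjacent $\rpi{\hat{\tau}}$'s (using $(\rpi{\hat{\tau}})^2=I$) gives
\[
\rv{\phi}{\hat{\sigma}}\,\rpi{\hat{\tau}}\,\rv{-\phi}{\hat{\sigma}}=\rpi{\hat{v}_m}\,\rpi{\hat{\tau}}\,\rpi{\hat{v}_m}.
\]
By \lem{sq_trans} this last product equals $\rpi{\hat{v}'}$ with $\hat{v}'=\hat{R}_{\hat{v}_m}(\pi)\hat{\tau}$, so it remains to verify that $\hat{v}'=\hat{v}$.

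The geometric verification is the step I expect to need the most care, but it is clean: because $\hat{\tau}\perp\hat{\sigma}$, the vector $\hat{v}_m=\hat{R}_{\hat{\sigma}}(\phi/2)\hat{\tau}$ also lies in the plane perpendicular to $\hat{\sigma}$, and so does $\hat{v}=\hat{R}_{\hat{\sigma}}(\phi)\hat{\tau}$. In that plane, $\hat{\tau}$ and $\hat{v}$ make angles $0$ and $\phi$ with a reference direction, while $\hat{v}_m$ lies at angle $\phi/2$, i.e.\ exactly on the bisector. Rotating $\hat{\tau}$ by $\pi$ about such a bisector reflects it onto $\hat{v}$, so $\hat{R}_{\hat{v}_m}(\pi)\hat{\tau}=\hat{R}_{\hat{\sigma}}(\phi)\hat{\tau}=\hat{v}$, closing the argument. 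Combining the single-qubit identity with the case analysis above then gives the controlled statement. (Alternatively, one could bypass the explicit decomposition and invoke the standard fact that conjugation by $\rv{\phi}{\hat{\sigma}}$ implements the SO(3) rotation $\hat{R}_{\hat{\sigma}}(\phi)$ on the Bloch-axis of any $SU(2)$ rotation, but deriving it via \lem{2_rpi} and \lem{sq_trans} keeps the proof self-contained within the framework of the paper.)
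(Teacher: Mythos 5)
Your proposal is correct and follows essentially the same route as the paper: both identify the bisector $\hat{v}_M=\hat{R}_{\hat{\sigma}}(\phi/2)\hat{\tau}$, use \lem{2_rpi} to write $\rv{\phi}{\hat{\sigma}}=\rpi{\hat{v}_M}\rpi{\hat{\tau}}$ (and its adjoint), and reduce the conjugation to a $\rpi{\hat{v}_M}\rpi{\hat{\tau}}\rpi{\hat{v}_M}$ sandwich handled by \lem{sq_trans}, with the same geometric check that the bisector maps $\hat{\tau}$ to $\hat{v}$. The only cosmetic difference is that you prove the single-qubit identity first and then redo the control-set case analysis, whereas the paper works directly on the controlled gate via \lem{MC_transform}.
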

\begin{proof}
    Using \lem{MC_transform}, $\MCO{\rpi{\hat{v}}}{C}{t} = \sqO{\rpi{\hat{v}_M}}{t}\MCO{\rpi{\hat{\tau}}}{C}{t}\sqO{\rpi{\hat{v}_M}}{t}$ with $\hat{v}_M\in M(\hat{v},\hat{\tau})$. Since both $\hat{v},\hat{\tau}$ are perpendicular to $\hat{\sigma}$ and $\hat{v}=\hat{R}_{\hat{\sigma}}(\phi)\hat{\tau}$, we can choose $\hat{v}_M=\hat{R}_{\sigma}(\frac{\phi}{2})\hat{\tau}$. This vector is on the same plane as $\hat{v},\hat{\tau}$, and $\ang{\hat{v}}{\hat{v}_M}=\ang{\hat{\tau}}{\hat{v}_M}$ and therefore it satisfies $\hat{v}=\hat{R}_{\hat{v}_M}(\pi)\hat{\tau}$. Substituting  $\MCO{\rpi{\hat{\tau}}}{C}{t}=\sqO{\rpi{\hat{\tau}}}{t}\MCO{\rpi{\hat{\tau}}}{C}{t}\sqO{\rpi{\hat{\tau}}}{t}$ provides:
    \[
    \MCO{\rpi{\hat{v}}}{C}{t} = \sqO{\rpi{\hat{v}_M}}{t}\sqO{\rpi{\hat{\tau}}}{t}\MCO{\rpi{\hat{\tau}}}{C}{t}\sqO{\rpi{\hat{\tau}}}{t}\sqO{\rpi{\hat{v}_M}}{t}
    \]
    From \lem{2_rpi}, $\rpi{\hat{v}_M}\rpi{\hat{\tau}}=\rv{\phi}{\hat{\sigma}}$ and $\rpi{\hat{\tau}}\rpi{\hat{v}_M}=\rv{-\phi}{\hat{\sigma}}$.
\end{proof}
Using \lem{Cpi_zz_yy} with $\hat{\tau}=\hat{x}$ and $\hat{\sigma}\in \{\hat{y},\hat{z} \}$ allows one to implement any single-controlled $\pi$-rotation about a unit vector on the $xy$ or $xz$ plane using two $R_{\hat{z}}$ or two $R_{\hat{y}}$ gates, respectively, along with one CNOT gate.

We now wish to decompose \qc{CU4_a2a} using gates from the universal sets $\{CNOT,R_{\hat{y}},R_{\hat{z}}\}$, $\{CNOT,R_{\hat{v}}\}$ and $\{CNOT,\rpicl\}$. For each of these sets, we construct a circuit which can implement any controlled-$U(4)$ operator, and present the cost of the circuit in  all-to-all and in LNN architectures. 
\begin{theorem}\label{thm:csu4_rzry}
 Any $\MCO{V}{c}{\{t_1,t_2\}}, V\in U(4)$ gate can be implemented over three neighbouring qubits using $10$ $\ry$ gates and $15$ $\rz$ gates, in addition to $10$ or $13$ CNOT gates in all-to-all or LNN connectivity, respectively.
\end{theorem}
\begin{proof}
    According to \lem{CU4_a2a_lem}, any $\MCO{V}{c}{\{t_1,t_2\}}, V\in U(4)$ gate can be implemented using \qc{CU4_a2a} such that $\hat{v}_3,\hat{v}_5$ are on the $xz$ plane, and $\hat{v}_1,\hat{v}_2,\hat{v}_4$ are on the $xy$ plane.
    We can simply decompose \qc{CU4_a2a} in terms of $\{CNOT,\ry,\rz\}$ gates. \\
    First we note that $\hat{y}=\hat{R}_{\hat{z}}(\frac{\pi}{2})\hat{x}$, and $\hat{z}=\hat{R}_{\hat{y}}(-\frac{\pi}{2})\hat{x}$, thus according to \lem{Cpi_zz_yy}, $\MCO{iY}{t_1}{t_2}=\sqO{S}{t_2}\MCO{X}{t_1}{t_2}\sqO{S^\dagger}{t_2}\sqO{S}{t_1}$, and $\MCO{Z}{t_2}{t_1} = \sqO{\rv{-\frac{\pi}{2}}{\hat{y}}}{t_2}\MCO{X}{t_2}{t_1}\sqO{\rv{\frac{\pi}{2}}{\hat{y}}}{t_2}$. Any other controlled gate in the circuit can be implemented according to \lem{Cpi_zz_yy} if the axis is constrained to a specific plane, or according to \lem{Cpi_dec_yz} otherwise. Applying these to all gates and merging sequential rotations into one rotation provides the following structure:
\[
\scalebox{0.73}{
\Qcircuit @C=0.2em @R=0.1em @!R { 
	 	\nghost{{c} :  } & \lstick{{c} :  }  & \gate{\mathrm{\rz}}  & \ctrl{1}  & \qw  & \qw & \qw & \ctrl{1}  &  \qw & \qw & \qw & \ctrl{1} & \qw & \qw & \qw &  \ctrl{1} & \qw & \qw & \qw\\
	 	\nghost{{t}_{1} :  } & \lstick{{t}_{1} :  } & \gate{\mathrm{\rz}} & \targ\ar @{-} [1,0] &  \gate{\mathrm{R}} &  \ctrl{1}  & \gate{\mathrm{\ry}} & \targ\ar @{-} [1,0] & \gate{\mathrm{\ry}} & \targ & \gate{\mathrm{\ry}} & \targ & \gate{\mathrm{\ry}} & \ctrl{1} & \gate{\mathrm{R}} & \targ\ar @{-} [1,0] & \gate{\mathrm{\ry}} & \gate{\mathrm{\rz}} & \qw \\
	 	\nghost{{t}_{2} :  } & \lstick{{t}_{2} :  } & \gate{\mathrm{\rz}} & \targ  & \gate{\mathrm{R}} &  \targ & \gate{\mathrm{\rz}} & \targ & \gate{\mathrm{\rz}}  & \ctrl{-1} & \qw & \qw & \qw & \targ & \gate{\mathrm{R}} & \targ & \gate{\mathrm{\ry}} & \gate{\mathrm{\rz}} & \qw \\
 }
 }
 \qcref{CU4_a2a_zy}
\]
where we only indicate the axis of rotation if known, and do not specify the angles for simplicity. This circuit requires four arbitrary rotations which cost 
$4$ $\ry$ gates and $8$ $\rz$ gates using the $ZYZ$ decomposition. In addition this circuit uses $6$ $\ry$ gates, $7$ $\rz$ gates and $10$ CNOTs.
\\
 Following the steps in \lem{CU4_LNN_top} and \lem{CU4_LNN_mid}, it is clear that single qubit rotations are not added in the process of converting this circuit into its LNN versions which require 13 CNOT gates for any permutation over three neighbouring qubits.
\end{proof}

The decomposition in terms of $\{CNOT,R_{\hat{v}}\}$ gates follows simply from \qc{CU4_a2a_zy}, merging any consecutive rotations into one, and following the same logic of \thm{csu4_rzry} for the LNN structures. Those steps show the correctness of \thm{csu4_rv}.

\begin{theorem}\label{thm:csu4_rv}
    Any $\MCO{V}{c}{\{t_1,t_2\}}, V\in U(4)$ gate can be implemented over three neighbouring qubits using $15$ $R_{\hat{v}}$ gates, in addition to $10$ or $13$ CNOT gates in all-to-all or LNN connectivity respectively.
\end{theorem}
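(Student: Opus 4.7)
The plan is to read the bound of \thm{csu4_rv} directly off the circuit already constructed in the proof of \thm{csu4_rzry}, exploiting only the fact that a single $R_{\hat v}(\lambda)$ gate carries three real parameters and therefore realises any element of $SU(2)$. The controlled-$U(4)$ circuit \qc{CU4_a2a_zy} already implements an arbitrary $\MCO{V}{c}{\{t_1,t_2\}}$ with $10$ CNOTs plus a list of $R_{\hat y}$ and $R_{\hat z}$ gates on each of the three qubits; since every $\ry(\theta)$ and $\rz(\phi)$ is a special case of $R_{\hat v}(\lambda)$, the only work is to merge adjacent single-qubit rotations whenever no two-qubit gate separates them.

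My first step is to recall why such merging is legal: any product of $\ry$ and $\rz$ rotations lies in $SU(2)$, and any $U\in SU(2)$ can be written as $R_{\hat v}(\lambda)$ for some unit vector $\hat v$ and angle $\lambda\in(-\pi,\pi]$ (this is the defining parametrisation used from the start of Section~1.1 and underpins \lem{2_rpi}). Therefore, on each qubit I may consider the ordered sequence of positions between consecutive two-qubit gates and replace every maximal block of single-qubit gates by one $R_{\hat v}$.

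Next I would carry out the counting block-by-block in \qc{CU4_a2a_zy}. On the control $c$ the only single-qubit rotation is the initial $\rz$, contributing one $R_{\hat v}$. On $t_1$ the CNOTs split the time line into blocks that hold in order $\rz$, a generic $R$ (a $ZYZ$ piece of $U_1$), four single $\ry$'s, the generic $R^\dagger$ (from $U_1^\dagger$), and a trailing $\ry \, \rz$ pair which collapses into a single $R_{\hat v}$; this yields eight $R_{\hat v}$ gates on $t_1$. The same accounting on $t_2$ gives six $R_{\hat v}$ gates (four stand-alone blocks, one generic $R$ from $U_2$ and one from $U_2^\dagger$, and a final $\ry\,\rz$ pair which again merges into one $R_{\hat v}$). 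The total is $1+8+6=15$ $R_{\hat v}$ gates, along with the $10$ CNOTs inherited from \qc{CU4_a2a_zy}, which settles the all-to-all statement.

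For the LNN variant I would argue that none of the rewrites used in \lem{CU4_LNN_top} and \lem{CU4_LNN_mid} ever inserts a new single-qubit rotation \emph{between} a pair of CNOTs on the same wire: the transformations either reroute a two-target controlled $\pi$-rotation via \qc{a2a2lnn}, which introduces only CNOTs, or absorb the swap gates via the CNOT identities shown after \qc{CU4_LNN_mid}, which again introduce only CNOTs. Consequently the $R_{\hat v}$ count of $15$ is preserved, while the CNOT cost rises to $13$ exactly as in \thm{csu4_rzry}. The main (and only) subtlety will be to keep the block-by-block merging honest on each of the three qubits in each of the three LNN positions of $c$; once that bookkeeping is checked, the theorem is immediate.
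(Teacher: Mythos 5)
Your proposal is correct and follows essentially the same route as the paper: the paper likewise obtains \thm{csu4_rv} by taking \qc{CU4_a2a_zy}, merging every maximal block of consecutive single-qubit rotations on a wire into one $R_{\hat{v}}$ (legal since any product of $\ry$ and $\rz$ rotations is an $SU(2)$ element and hence a single $R_{\hat{v}}(\lambda)$), and noting that the LNN rewrites of \lem{CU4_LNN_top} and \lem{CU4_LNN_mid} add only CNOTs. Your explicit block-by-block count of $1+8+6=15$ is consistent with the paper's stated total, which the paper itself leaves implicit.
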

Alternatively, \qc{CU4_a2a} can be easily converted to a circuit which uses $\{CNOT,\rpicl\}$ gates by repeating the same steps as the ones used to produce \qc{CU4_a2a_zy}, but decomposing the last two controlled $\pi$-rotations according to \lem{MC_transform}. Then adding an $X$ gate on both sides of each target of the first five CNOT gates which are controlled by qubit $c$ allows to convert the $\rz /\ry$ gates to $\pi$-rotations with axis on the $xy/xz$ plane using the same logic as in the proof of \lem{Cpi_zz_yy}. The $R$ gates are then decomposed according to \lem{2_rpi}, with one axis chosen from the $xy$ plane, and the $\rz$ gate on the control is decomposed similarly as $X\rpisubu{}{z}$. The following circuit presents the decomposition in terms of the Hermitian universal set $\{CNOT,\rpicl\}$.
\[
 \resizebox{0.85\linewidth}{!}{
\Qcircuit @C=0.25em @R=0.15em @!R { 
	 	\nghost{{c} :  } & \lstick{{c} :  }  & \qw  & \ctrl{1} & \rpigatesubu{z}{} & \gate{\mathrm{X}}    & \qw & \qw & \ctrl{1}  &  \qw & \qw & \qw & \ctrl{1} & \qw & \qw & \qw & \qw &  \ctrl{1} & \qw & \qw \\
	 	\nghost{{t}_{1} :  } & \lstick{{t}_{1} :  } & \rpigatesubu{z}{} & \targ\ar @{-} [1,0] &  \rpigatesubu{z}{} & \rpigatesub{} &  \ctrl{1}  & \rpigatesubu{y}{} & \targ\ar @{-} [1,0] & \rpigatesubu{y}{} & \targ & \rpigatesubu{y}{} & \targ & \rpigatesubu{y}{} & \ctrl{1} & \rpigatesubu{z}{} & \rpigatesub{}& \targ\ar @{-} [1,0] & \rpigatesub{} & \qw \\
	 	\nghost{{t}_{2} :  } & \lstick{{t}_{2} :  } & \rpigatesubu{z}{} & \targ  & \rpigatesubu{z}{} & \rpigatesub{} &  \targ & \rpigatesubu{z}{} & \targ & \rpigatesubu{z}{}  & \ctrl{-1} & \qw & \qw & \qw & \targ & \rpigatesubu{z}{} & \rpigatesub{} & \targ & \rpigatesub{} & \qw \\
 }
 }
 \qcref{CU4_a2a_pi}
\]
\\
This circuit uses $20$ $\rpicl$ gates in total, while one is $X$ and $13$ are restricted to the $xy$ or the $xz$ plane. In the above circuit, it is implicit that according to the specific $V$, the axes corresponding to each of the $\Pi$ gates will be determined. This circuit can be converted into its LNN versions in a similar manner to the proof of \thm{csu4_rzry}. Those steps show the correctness of \thm{csu4_rpi}.

\begin{theorem}\label{thm:csu4_rpi}
    Any $\MCO{V}{c}{\{t_1,t_2\}}, V\in U(4)$ gate can be implemented over three neighboring qubits using $20$ $\rpicl$ gates, in addition to $10$ or $13$ CNOT gates in all-to-all or LNN connectivity, respectively.
\end{theorem}
\section{Proofs of $\Pi$ gate Lemmas}\label{apx:pi_lemmas}

\noindent Here we provide proofs for Lemmas mentioned in the main text.

\setcounter{lemma}{0}
\begin{lemma}
\label{lem:2_rpi_apx}
Any $\rv{\lambda}{\hat{v}}\in SU(2)$ operator can be implemented as $\rpi{\hat{v}_2}\rpi{\hat{v}_1}$ 
with $\hat{v}_1$ as any unit vector perpendicular to $\hat{v}$, and $\hat{v}_2 = \hat{R}_{\hat{v}}(\frac{\lambda}{2})\hat{v}_1$ with $\frac{\lambda}{2}\in (-\pi,\pi]$.
\end{lemma}
\begin{proof}
First, we focus on the case $\frac{\lambda}{2}\in (0,\pi)$ in which the ordered set $(\hat{v}_1,\hat{v}_2, \hat{v})$ is positively-oriented.
Since $\hat{v}_1\perp\hat{v}$, any rotation of $\hat{v}_1$ around $\hat{v}$ is perpendicular to $\hat{v}$ as well, and so $\hat{v}_2\perp\hat{v}$, and $\ang{\hat{v}_1}{\hat{v}_2} = \frac{\lambda}{2}$. Therefore, using vector arithmetic,
$
\hat{v}_1\cdot \hat{v}_2=\cos{\frac{\lambda}{2}}$, and $ \hat{v}_1\times\hat{v}_2=\hat{v}\sin{\frac{\lambda}{2}}
$. 
The following is achieved by matrix multiplication, followed by substitutions provided by the above equations, e.g. $[v_{1_x}v_{2_y}-v_{1_y}v_{2_x}]=[v_z\sin{\tfrac{\lambda}{2}}]$.
\[
\rpi{\hat{v}_2}\rpi{\hat{v}_1} =
\left(
\begin{matrix}
\cos{\tfrac{\lambda}{2}}-iv_z\sin{\tfrac{\lambda}{2}} & 
(-v_y-iv_x)\sin{\tfrac{\lambda}{2}}\\
(v_y-iv_x)\sin{\tfrac{\lambda}{2}} & 
\cos{\tfrac{\lambda}{2}}+iv_z\sin{\tfrac{\lambda}{2}}
\end{matrix}
\right)
\]
which is equivalent to $\rv{\lambda}{\hat{v}}$ as shown in \cite{nielsen_quantum_2002}.
For $\frac{\lambda}{2}\in (-\pi,0)$, the ordered set $(\hat{v}_1,\hat{v}_2, -\hat{v})$ is positively-oriented, and $\ang{\hat{v}_1}{\hat{v}_2} = \abs{\frac{\lambda}{2}} =-\frac{\lambda}{2}$. Thus, using the above, $\rpi{\hat{v}_2}\rpi{\hat{v}_1} = \rv{-\lambda}{-\hat{v}} = \rv{\lambda}{\hat{v}}$.
For the specific cases of $\frac{\lambda}{2}\in \{0,\pi\}$, we get $\rv{0}{\hat{v}}=\rpi{\hat{v}_1}\rpi{\hat{v}_1}=I$, and $\rv{2\pi}{\hat{v}}=\rpi{-\hat{v}_1}\rpi{\hat{v}_1}=-I$ as expected. Therefore, the statement holds for $\frac{\lambda}{2}\in (-\pi,\pi]$.
\end{proof}

\setcounter{lemma}{2}
\begin{lemma}\label{lem:1q_transform_apx} 
$
\rv{\lambda}{\hat{v}}
=
\rpi{\hat{v}_{M}}
\rv{\lambda}{\hat{v}'}
\rpi{\hat{v}_{M}}
$
for any angle $\lambda$, unit vectors $\hat{v},\hat{v}'$, and $\hat{v}_M \in M(\hat{v},\hat{v}')$.
\end{lemma}
\begin{proof}
According to \lem{2_rpi}, 
$\rv{\lambda}{\hat{v}}
=
\rpi{\hat{v}_2}
\rpi{\hat{v}_1}
$, such that $\hat{v}_1\perp\hat{v}$, and 
$\hat{v}_2=\hat{R}_{\hat{v}}(\frac{\lambda}{2})\hat{v}_1$.
Using \lem{sq_trans}, 
$
\rpi{\hat{v}_j}
=
\rpi{\hat{v}_{M}}
\rpi{\hat{v}'_j}
\rpi{\hat{v}_{M}}
$ with $j\in\{1,2\}$, and $\hat{v}'_j:=\hat{R}_{\hat{v}_{M}}(\pi)\hat{v}_j$. Substituting and canceling out two $\rpi{\hat{v}_{M}}$ operators provides
$\rv{\lambda}{\hat{v}}
=
\rpi{\hat{v}_{M}}
\rpi{\hat{v}'_2}
\rpi{\hat{v}'_1}
\rpi{\hat{v}_{M}}$.

Since the angles between vectors are conserved under rotations, and the same rotation matrix $\hat{R}_{\hat{v}_{M}}(\pi)$ is applied on $\hat{v}_1,\hat{v}_2,\hat{v}$ to produce $\hat{v}'_1,\hat{v}'_2,\hat{v}'$ respectively, we get that $\hat{v}'_1\perp\hat{v}'$, and 
$\hat{v}'_2=\hat{R}_{\hat{v}'}(\frac{\lambda}{2})\hat{v}'_1$. Therefore, $\rv{\lambda}{\hat{v}'}
=
\rpi{\hat{v}'_2}
\rpi{\hat{v}'_1}
$.
\end{proof}
\setcounter{lemma}{11}
\begin{lemma}\label{lem:rpi_if_1cx} 
    Any $\MCO{U}{c}{t}, U\in U(2)$ gate which can be implemented using one CNOT must satisfy $U=e^{i\psi}\rpi{\hat{v}}$.
\end{lemma}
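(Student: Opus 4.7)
The plan is to leverage \lem{1cx_gen_struct}: by hypothesis, $\MCO{U}{c}{t}$ admits the canonical decomposition
\[
\MCO{U}{c}{t} = \sqO{P(\psi)}{c}\sqO{U_b}{t}\MCO{X}{c}{t}\sqO{U_a}{t},
\]
so I only need to extract the form of $U$ by testing the two computational basis states of the control qubit on both sides.

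Evaluating on $\ket{0}_c\ket{\phi}_t$, the left-hand side returns $\ket{0}_c\ket{\phi}_t$ since a controlled gate acts as the identity on a $\ket{0}$ control, while on the right the $\MCO{X}{c}{t}$ and $\sqO{P(\psi)}{c}$ are trivial, leaving $\ket{0}_c U_b U_a \ket{\phi}_t$. Matching forces $U_b U_a = I$, i.e.\ $U_b = U_a^{\dagger}$. Evaluating on $\ket{1}_c\ket{\phi}_t$ instead, the left-hand side becomes $\ket{1}_c U\ket{\phi}_t$, while the right-hand side becomes $e^{i\psi}\ket{1}_c U_a^{\dagger} X U_a \ket{\phi}_t$, since the CNOT acts as $X$ on the target and $P(\psi)$ contributes the phase $e^{i\psi}$ on the $\ket{1}_c$ branch. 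Hence
\[
U = e^{i\psi}\, U_a^{\dagger} X U_a.
\]

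The final step is to identify $V := U_a^{\dagger} X U_a$ as a $\pi$-rotation. The matrix $V$ inherits three properties from $X$: it is Hermitian, unitary, and traceless (the last by cyclicity of the trace). Any $2\times 2$ Hermitian traceless matrix takes the form $a_x X + a_y Y + a_z Z$ with real coefficients, and imposing unitarity forces $a_x^{2}+a_y^{2}+a_z^{2}=1$, so $\hat{v} := (a_x,a_y,a_z)$ is a unit vector. Using the identity $\rpi{\hat{v}} = i\,\rv{\pi}{\hat{v}} = \hat{v}\cdot\vec{\sigma}$ that follows from the single-qubit section, this matrix is precisely $\rpi{\hat{v}}$. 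Therefore $U = e^{i\psi}\rpi{\hat{v}}$, as claimed.

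The only subtlety worth watching is the phase bookkeeping: the $\sqO{P(\psi)}{c}$ prefactor touches only the $\ket{1}_c$ branch, which is exactly what allows the scalar $e^{i\psi}$ to appear in front of $U$ without polluting the $\ket{0}_c$ identity equation that fixes $U_b=U_a^{\dagger}$. Beyond that, the argument is essentially direct, because \lem{1cx_gen_struct} already does the heavy lifting of pinning down the circuit structure.
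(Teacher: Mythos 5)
Your proposal is correct. The first half --- invoking \lem{1cx_gen_struct}, testing the two control branches to force $U_b U_a = I$ (the paper allows a global phase $e^{i\omega}$ here, which then cancels, so the conclusion $U = e^{i\psi}U_a^{\dagger}XU_a$ is identical) --- matches the paper's argument. Where you genuinely diverge is the final step. The paper identifies $U_a^{\dagger}XU_a$ as a $\pi$-rotation by staying inside its own $\pi$-rotation calculus: it decomposes $U_a$ as $\rpi{\hat{v}_2}\rpi{\hat{v}_1}$ via \lem{2_rpi} and then applies the conjugation identity of \lem{sq_trans} twice to collapse $\rpi{\hat{v}_1}\rpi{\hat{v}_2}X\rpi{\hat{v}_2}\rpi{\hat{v}_1}$ into a single $\rpi{\hat{v}}$. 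You instead observe that $V=U_a^{\dagger}XU_a$ is Hermitian, unitary, and traceless, hence of the form $\hat{v}\cdot\vec{\sigma}$ with $\hat{v}$ a unit vector, which is exactly $\rpi{\hat{v}}$ by the explicit matrix given in \ssec{single_qubit_op}. Both are valid; your route is shorter, more elementary, and independent of the geometric lemmas, and it makes transparent the characterization of $\pi$-rotations as precisely the traceless Hermitian unitaries. The paper's route buys internal consistency with its rotation-transformation machinery and exhibits the axis $\hat{v}$ constructively as an explicit sequence of reflections of $\hat{x}$, which is the form reused elsewhere in the paper.
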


\begin{proof}
Any circuit over two qubits with one $CNOT$ gate can be expressed in the following generic form:
    \[
    \MCO{U}{c}{t} =
    \sqO{U_4}{t}\sqO{U_3}{c}
    \MCO{X}{c}{t}
    \sqO{U_1}{c}\sqO{U_2}{t}
    \]
    The gates $U_{j\in\{1,2,3,4\}}\in U(2)$ can be written as
$
        \left(
    \begin{matrix}
        \alpha_j & \gamma_j\\
        \beta_j & \delta_j 
    \end{matrix}
    \right)$.
    We adjust the equation and apply each side on $\ket{00}$.
    \[
    \sqO{U^\dagger_4}{t}
    \MCO{U}{c}{t}
    \sqO{U^\dagger_2}{t}\ket{00}
    =
    \sqO{U_3}{c}
    \MCO{X}{c}{t}
    \sqO{U_1}{c}\ket{00}   
    \]
    Since $\MCO{U}{c}{t}$ applies the identity if the computational basis state of the control is $\ket{0}$, this is equivalent to
    \[
    \ket{0}_c(U^\dagger_4U^\dagger_2\ket{0})_t
    = 
        \sqO{U_3}{c}
    (\alpha_1\ket{00}+\beta_1\ket{11})
    =
    \]
    \[
    \ket{0}_c
    (\alpha_1\alpha_3\ket{0} +
    \beta_1\gamma_3\ket{1})_t
    +
    \ket{1}_c
    (\alpha_1\beta_3\ket{0} +
    \beta_1\delta_3\ket{1})_t       
    \]
    
    This provides the constraints 
    $U^\dagger_4U^\dagger_2 = \alpha_1\alpha_3I +
    \beta_1\gamma_3X$ and
     $\lvert\alpha_1\rvert\lvert\beta_3\rvert = \lvert\beta_1\rvert\lvert\delta_3\rvert = 0$.
    The identities $\lvert\alpha_j\rvert = \lvert\delta_j\rvert$,  $\lvert\beta_j\rvert = \lvert\gamma_j\rvert$ and $\lvert\alpha_j\rvert^2 + \lvert\beta_j\rvert^2 = 1$ are used to show that there are only two possible solutions up to a global phase:
    \[
    (U_4,U_3,U_1) =
\begin{cases}
    (U_2^\dagger, P(\psi_3), P(\psi_1))     & \alpha_1 \not=0\\
    (U_2^\dagger X, P(\psi'_3)X, P(\psi'_1)X) & \alpha_1 = 0
\end{cases}
    \]

    And therefore any $\MCO{U}{c}{t}$ which can be implemented using a single CNOT can be written as
\[
\MCO{U}{c}{t} = \sqO{P(\psi)}{c}\sqO{U_2^\dagger}{t}\MCO{X}{c}{t}\sqO{U_2}{t}
\]    
    this is obtained by substituting each of the solutions and using the identities $\sqO{X}{c}\sqO{P(\psi)}{c} = \sqO{P(-\psi)}{c}\sqO{X}{c}$, $\MCO{X}{c}{t}\sqO{P(\psi)}{c} = \sqO{P(\psi)}{c}\MCO{X}{c}{t}$  and  $\sqO{X}{c}
    \MCO{X}{c}{t}
    \sqO{X}{c}\sqO{X}{t} = \MCO{X}{c}{t}$.

Therefore, it must hold that $U = e^{i\psi}U^{\dagger}_2 X U_2$. According to \lem{2_rpi}, we can find $\hat{v}_1,\hat{v}_2$ such that $U_2 = \rpi{\hat{v}_2}\rpi{\hat{v}_1}$, and so $U = e^{i\psi}\rpi{\hat{v}_1}\rpi{\hat{v}_2} X \rpi{\hat{v}_2}\rpi{\hat{v}_1}$. According to \lem{sq_trans}, one can find unit vectors $\hat{v}'_2,\hat{v}$ such that $\rpi{\hat{v}_2} X \rpi{\hat{v}_2} = \rpi{\hat{v}'_2} $ , and  $\rpi{\hat{v}_1} \rpi{\hat{v}'_2} \rpi{\hat{v}_1} = \rpi{\hat{v}}$. Therefore, $U = e^{i\psi}\rpi{\hat{v}}$.
\end{proof}

\section{Toffoli decompositions}\label{apx:toffolispi}
The following circuit demonstrates the decomposition of the Toffoli gate using only gates from $\{CNOT,H,X,\rpisub{T}\}$.
\[
\scalebox{0.8}{
\Qcircuit @C=1.0em @R=0.73em @!R { 
	 	\nghost{} & \lstick{} & \targ & \qw \\
	 	\nghost{} & \lstick{} & \ctrl{-1} & \qw \\
	 	\nghost{} & \lstick{} & \ctrl{-1} & \qw \\
 }
\hspace{5mm}\raisebox{-7mm}{=}\hspace{0mm}
\Qcircuit @C=0.5em @R=0.2em @!R { 
	 	\nghost{} & \lstick{} & \gate{\mathrm{H}} & \ctrl{1} & \qw & \ctrl{1} & \ctrl{2} & \gate{\mathrm{X}} & \qw & \rpigatesub{T}  & \ctrl{2} & \gate{\mathrm{H}} & \qw \\
	 	\nghost{} & \lstick{} & \ctrl{1} & \targ & \rpigatesub{T} & \targ & \qw & \rpigatesub{T} & \ctrl{1} & \qw & \qw & \qw & \qw \\
	 	\nghost{} & \lstick{} & \targ & \qw & \rpigatesub{T}  & \qw & \targ & \rpigatesub{T} & \targ & \rpigatesub{T} & \targ & \rpigatesub{T} & \qw \\
 }
 }
\]
This decomposition can be easily achieved by converting the well known implementation of the Toffoli gate, replacing $T$ and $T^{\dagger}$ gates with $\rpisub{T}X$ and  $X\rpisub{T}$ respectively, and canceling out $X$ gates.\\
The following circuit implements the Toffoli gate up to a global phase using the minimal universal Hermitian set $\{CNOT,H,\rpisub{T}\}$ with seven CNOT, seven $\rpisub{T}$ and two $H$ gates.
\[
\resizebox{0.99\linewidth}{!}{
\Qcircuit @C=1.0em @R=0.73em @!R { 
	 	\nghost{} & \lstick{} & \ctrl{1} & \qw \\
	 	\nghost{} & \lstick{} & \ctrl{1} & \qw \\
	 	\nghost{} & \lstick{} & \targ & \qw \\
 }
\hspace{5mm}\raisebox{-7mm}{=}\hspace{0mm}
\Qcircuit @C=0.5em @R=0.2em @!R { 
	 	\nghost{} & \lstick{} & \qw & \qw & \ctrl{1} & \qw & \ctrl{1} & \targ & \rpigatesub{T} & \targ & \qw & \rpigatesub{T}  & \targ & \rpigatesub{T} & \qw \\
	 	\nghost{} & \lstick{} & \qw & \ctrl{1} & \targ & \rpigatesub{T} & \targ & \qw & \rpigatesub{T} & \ctrl{-1} & \ctrl{1} & \qw & \qw & \qw & \qw \\
	 	\nghost{} & \lstick{} & \gate{\mathrm{H}} & \targ & \qw & \rpigatesub{T}  & \qw & \ctrl{-2} & \qw & \qw & \targ & \rpigatesub{T} & \ctrl{-2} & \gate{\mathrm{H}} & \qw \\
 }
 }
\]
\end{document}